\let\l@ENGLISH\l@english
\newcommand{\alp}{\ensuremath{\alpha}}
\newcommand{\del}{\ensuremath{\delta}}
\newcommand{\eps}{\ensuremath{\epsilon}}
\newcommand{\gam}{\ensuremath{\gamma}}
\newcommand{\lam}{\ensuremath{\lambda}}
\newcommand{\lammin}{\ensuremath{\lambda_{\text{min}}}}
\newcommand{\rhomin}{\ensuremath{\rho_{\text{min}}}}
\newcommand{\ome}{\ensuremath{\omega}}
\newcommand{\sig}{\ensuremath{\sigma}}
\newcommand{\ta}{\ensuremath{\tilde{a}}}
\newcommand{\thh}{\ensuremath{\tilde{h}}}
\newcommand{\tg}{\ensuremath{\tilde{g}}}
\newcommand{\hx}{\ensuremath{\widehat{x}}}
\newcommand{\ts}{\ensuremath{\tilde{s}}}
\newcommand{\Omi}{{\ensuremath{\mathcal{O}}}}
\newcommand{\Odis}{{\ensuremath{\mathcal{O}_{\text{dis}}}}}
\newcommand{\cU}{{\ensuremath{\mathcal{U}}}}
\newcommand{\Stetig}{\ensuremath{ {C} }}                                          
\newcommand{\C}{{\ensuremath{\mathbb C}}}
\newcommand{\Q}{{\ensuremath{\mathbb Q}}}
\newcommand{\R}{{\ensuremath{\mathbb R}}}
\newcommand{\N}{{\ensuremath{\mathbb N}}}
\newcommand{\Z}{{\ensuremath{\mathbb Z}}}
\newcommand{\FmatrixM}{{\ensuremath{F}^{d,n}_{M}}}
\newcommand{\RA}{\ensuremath{\Rightarrow} }
\newcommand{\LRA}{\ensuremath{\Leftrightarrow} }
\newcommand{\vx}{{\ensuremath{x}}}
\newcommand{\vy}{\ensuremath{y}}
\newcommand{\hy}{{\ensuremath{\widehat{y}}}}
\newcommand{\tx}{\ensuremath{\tilde{x}}}
\newcommand{\td}{\ensuremath{\tilde{d}}}
\newcommand{\ty}{\ensuremath{\tilde{y}}}
\newcommand{\vtx}{\ensuremath{{\tilde{x}}}}
\newcommand{\vty}{\ensuremath{{\tilde{y}}}}
\newcommand{\tY}{{\ensuremath{S_{\tilde{y}}}}}
\newcommand{\tYstar}{{\ensuremath{S_{\tilde{y}}^*}}}
\newcommand{\tA}{\ensuremath{\tilde A }}                         
\newcommand{\tG}{\ensuremath{\widetilde G }}                         
\newcommand{\thmref}[1]{Satz~\ref{#1}}
\newcommand{\defref}[1]{Definition~\ref{#1}}
\newcommand{\appref}[1]{Appendix~\ref{#1}}
\newcommand{\figref}[1]{Abbildung~\ref{#1}}
\newcommand{\noi}{\noindent}
\newtheorem{definition}{Definition}         
\newtheorem{definition}[theorem]{Definition}   
\newtheorem{corrolary}{Corrolary} 
\newtheorem{conjecture}{Conjecture}         
\newtheorem{theorem}{Theorem}         
\par\noindent{\em Beweis\/}.}%
\hspace*{\fill}{\qed}\vspace{1ex}\par}
\par\noindent{\em Proof\/}.}%
\hspace*{\fill}{}\vspace{1ex}\par}
\par\vspace{1.5ex}\noindent{\em Remark\/}.}
\par\vspace{1.5ex}}
\newenvironment{remark}%
{\par\vspace{1.5ex}\noindent{\em Remark\/}.}
{\par\vspace{1.5ex}}
{\par\vspace{1.5ex}\noindent{\em Example\/}. }
{\par\vspace{1.5ex}}
{\noi\vspace{0.5ex}\small}
{\vspace{0.5ex}\par\normalsize}
\newcounter{Examplecount}
\renewcommand{\labelenumi}{(\roman{enumi})}\begin{list}{\labelenumi}
\renewcommand{\labelenumi}{(\arabic{enumi})}\begin{list}{\labelenumi}
\renewcommand{\labelenumi}{$\bullet$}\begin{list}{\labelenumi}
\newcommand{\set}[2]{\ensuremath{%
\setbox0=\hbox{\ensuremath{#2}}
\dimen@\ht0
\advance\dimen@ by \dp0
\left\{\left.#1\rule[-\dp0]{0pt}{\dimen@}\;\right|\;#2\right\} }}
\DeclareMathOperator*{\spann}{span}
\newcommand\diam{\operatorname{diam}} 
\newcommand\rk{\operatorname{rk}}
\newcommand{\Betrag}[1]{\ensuremath{ \left|#1\right| }}
\newcommand{\Norm}[1]{\ensuremath{ \left\|#1\right\| }}
\newcommand{\skprod}[1]{\ensuremath{ \left\langle #1 \right\rangle }}
\newcommand{\cc}[1]{{\ensuremath{\overline{#1}}}} 
\DeclareMathOperator{\supp}{supp}
\newcommand{\Pot}{\ensuremath{\mathcal P}}
\newcommand{\namen}[1]{{{#1}}}           
\newcommand{\shiftl}{\ensuremath{S^l}}           
\renewcommand\paragraph{\@startsection
{paragraph}{4}{\z@}{-3.5ex plus-1ex minus-.2ex}%
{1.3ex plus.2ex}{\normalfont\itshape}}
\newcommand{\Pro}{\Pi}
\renewcommand{\thmref}[1]{Theorem~\ref{#1}}
\renewcommand{\figref}[1]{Fig.~\ref{#1}}
\renewcommand{\defref}[1]{Definition~\ref{#1}}
\definecolor{gray}{rgb}{0.3,0.3,0.3}
{\color{black}}                      
{\color{black}}
\begin{document}
\onecolumn

\title{On the Stability of Sparse Convolutions}

\author{%
{Philipp Walk{\small $~^{\#}$}, Peter Jung{\small $~^{*}$}, G{\"o}tz E. Pfander{\small $~^{\dagger}$} }%
\vspace{1.6mm}\\
\fontsize{10}{10}\selectfont\itshape
$^{\#}$\,Institute of Theoretical Information Technology\\
Technical University Munich\\
Arcisstr. 21, 80333 Munich, Germany, philipp.walk@tum.de
\vspace{1.2mm}\\
\fontsize{10}{10}\selectfont\rmfamily\itshape
$^{*}$\,Heinrich-Hertz-Chair for Information Theory and Theoretical Information Technology\\
Technical University Berlin\\
Einsteinufer 25, 10587 Berlin, Germany, peter.jung@mk.tu-berlin.de
\vspace{0mm}\\
\fontsize{10}{10}\selectfont\rmfamily\itshape
$^{\dagger}$\,School of Engineering and Science\\
Jacobs University Bremen\\
Campus Ring 12, 28759 Bremen, Germany, g.pfander@jacobs-university.de%
}
\maketitle

\begin{abstract}
We give a stability result for sparse convolutions on $\ell^2(G)\times \ell^1(G)$ for torsion-free
discrete Abelian groups $G$ such as $\Z$. It turns out, that the torsion-free property prevents full cancellation in the
convolution of sparse sequences and hence allows to establish stability in each entry, that is, for any fixed entry of
the convolution the resulting linear map is injective with an universal lower
norm bound, which only depends on the support cardinalities of the sequences. This can be seen as a reverse statement of the
famous \namen{Young} inequality for sparse convolutions.  Our result hinges on a compression argument in additive set
theory. 
\end{abstract}

\section{Introduction}

Additive problems have increasingly become a focus in combinatorics, number theory, group theory and Fourier analysis as
pointed out, e.g., in the textbook of Tao and Vu \cite{Tao06}. The key hereby is an understanding of the additive
structure of finite subsets of an Abelian group $G$. The main result in the herein presented work is the application of
a recent compression result in additive set theory by Grynkiewicz \cite[Theorem 20.10]{Gry13} to sparse convolutions on
discrete Abelian groups which are \emph{torsion-free}, i.e., for any $N\in \N$ and $g\in G$ it holds $Ng=g+\dots +g=0$
if and only if $g=0$. Compressing the convolution of sparse sequences, i.e., sequences with finite support sets, reduces
to a compression of the sumset of their supports, since  $\supp(x*y)\subseteq \supp (x) + \supp(y)$. Our compression
result allows to obtain a reverse statement of Young's inequality \cite{You12,You13} for the convolution of all sparse
$(x,y)\in \ell^2(G)\times \ell^1(G)$ in form of
\begin{align}
  \Norm{x*y}_2 \geq \alp\Norm{x}_2\Norm{y}_1,\label{eq:result}
\end{align}
where $\alp>0$ exists and can be given in terms of the cardinalities of the supports of $x$ and $y$. 

For convolutions on generic locally compact Abelian groups, or LCA groups for short, such a reverse statement does not
hold in general.  To see this, take for any $N\in \N$ the finite group $\Z_N$. Full cancellation of the convolution occurs,
i.e., $x*y=0$, if we set $x=\del_0 - \del_1$ and $y=\sum_{i=0}^{N-1} \del_i$, where for each $i\in \Z_N$ the sequence
$\del_i$ is given element-wise by $\del_i(j)=1$ if $i=j$ and $0$ else. For the torsion-free group $\Z$ it is seen
easily that a finite support length or sparsity prior excludes such cancellations.  Note, that a sparsity prior on
$x$ and $y$ is not of benefit for torsion groups. E.g., for $G$ of even order $N$, we have for the $2-$sparse sequences
$x=\del_0 + \del_{N/2}$ and $y=\del_1 - \del_{N/2}$ full cancellation.%

On arbitrary LCA groups the sharp upper bounds and maximizers for Young's inequalities are known \cite{Bec75,BL76}, but
lower bounds for the reverse case have only been shown for positive functions so far \cite{Lei72,BL76,Bar98a}. On the
real line, the maximal orthogonal perturbation of a function supported on a small interval when convolved with a kernel
supported on a second small interval was analyzed in \cite{KPZ02}. 

In this work we focus on discrete LCA groups and  give lower bounds $\alp=\alp(s,f)$  for arbitrary functions
(sequences) depending only on their  discrete support length. This is a universal result and translates to
a weak stability for sparse convolutions, i.e., every $f-$sparse sequence $y$ induces a convolution map $\cdot *y$ which
is invertible over all $s-$sparse sequences $x$.  Furthermore, this allows identifiability of sufficiently sparse
auto-convolutions, since it holds
\begin{align}
  x*x -y*y =(x+y)*(x-y)
\end{align}
as shown by the first and second authors named in \cite{WJ14}.

The article is structured as follows: \thmref{thm:compression} shows  that sparse convolutions over
torsion-free discrete Abelian groups can be represented by convolutions over $\Z$ with support contained in
the first $n$ integers. The integer $n$ only depends on sparsity levels $s$ and $f$
of the convolution factors and not on the location or additive structure of the supports of the functions. 

This compressed representation guarantees  the existence of a lower norm bound $\alp(s,f)>0$ in \eqref{eq:result}.
\thmref{thm:numericbound} shows that the determination of a sharp lower bound is an NP hard problem. It can be seen as a
smallest restricted eigenvalue property of all $(s,f)-$sparse convolutions. In \thmref{thm:lowerbound} we give an
analytical lower bound $\alp(s,f,n)$, which scales exponentially in $s$ and $f$ and polynomial in $n$. Finally we will
show that indeed there exist sparse sequences, given by uniform samples of a Gaussians and a modulated Gaussians, that
show an exponential decay  of the lower bound in the sparsity as derived in \thmref{thm:lowerbound}. 

{\it Acknowledgments.} We would like to thank Holger Boche and Felix Krahmer for helpful discussions. This work was
partially supported by the DFG grant JU 2795/2 and BO 1734/24-1; it was carried in part during  G.E. Pfander's stay as
John von Neumann Visiting Professor at the Technical University Munich, the hospitality of the math department at TUM is
greatly appreciated. 

\section{Notation}

We will consider in this contribution only topological groups $G=(G,+,\Omi)$ which have a locally compact topology
$\Omi$ and are Abelian with group operation written additively, that is LCA groups. The up to normalization unique Haar
measure of the LCA group $G$ is denoted by $\mu$.  For $1\leq p<\infty$ we denote by $L^p(G,\mu)$ the Banach space of
all complex valued functions $x:G\to \C$ such that $\Norm{x}^p_p:=\int_G |x(g)|^pd\mu(g)<\infty$. The convolution of
$x,y\in L^1(G,\mu)$ is given by the formula 
\begin{align}
  (x * y)(g) =\int_G x(g)y(h-g) d\mu(g)\label{eq:convolution}
\end{align}
for $\mu-$almost every $g\in G$, see, e.g., \cite[Section 1.1]{Rud62}.  If $1\leq p,q,r \leq \infty$ and $1/p + 1/q=1 +
1/r$, then \eqref{eq:convolution} can be defined for $x\in L^p (G,\mu)$ and $y\in L^q (G,\mu)$ and \emph{Young's inequality} 
\begin{align}
  \Norm{x*y}_r \leq \Norm{x}_p \Norm{y}_q \label{eq:young}
\end{align}
holds, see for example \cite{Rud62,Bec75}. A \emph{reverse inequality} holds for \emph{positive functions} $x\in
L^p(G,\mu)$, $y\in L^q (G,\mu)$ and  $0<p, q,r\leq 1$ with $1/p + 1/q=1 + 1/r$,  indeed, we have 
\begin{align}
  \Norm{x}_p \Norm{y}_q \leq  \Norm{x* y}_r \label{eq:youngrev}
\end{align}
see \cite{Lei72},  \cite{BL76} and \cite{Bar98a}.  Here \eqref{eq:young} and \eqref{eq:youngrev} are sharp only if $p$
or $q$ is one.  For the general case, sharp bounds were obtained in \cite{Bec75,BL76}. 

In the following we consider only discrete LCA's, that is, LCA's whose topology is discrete and whose Haar measure is
therefore the counting measure. We then write $\ell^p(G):=L^p(G,\Pot(G),\lam)$ where
$\Pot(G)$ is the power set of $G$.  The set of
sparse sequences
\begin{align}
  \Stetig_c(G)=\set{x:G\to \C}{|\supp\vphantom{A} x|<\infty}\subset L^1(G,\lam) \label{eq:contcompfunc}
\end{align}
with supremum norm $\Norm{x}_\infty=\sup_{g\in G} |x(g)|$ forms a normed space, see
\appref{app:compactlysupported}, and
\begin{align}
  \int_G x(g) d\lam(g) = \sum_{g\in \supp (x) } x(g).\label{eq:finitemany}
\end{align}
The notation $C_c(G)$ for finitely supported sequences is justified since compact sets in the discrete topology are sets
with finitely many elements.    The set of $k-$sparse functions for any $k\in \N$ is then denoted by
\begin{align}
  \Sigma_k (G):=\set{\vx:G\to \C}{|\supp (x) | =\lam(\supp(x))\leq k} \subset \Stetig_c(G),
\end{align}
and we call the convolution \eqref{eq:convolution} of $x\in \Sigma_s$ and $y\in \Sigma_f$ an \emph{$(s,f)-$sparse
convolution}. Note, that the set $\Sigma_k(G)$ is neither linear nor convex.  For a set $A\subset G$ we denote the set
of functions with support contained in $A$ by $\Sigma_A (G)$. We write $\Sigma_k(\Z)=\Sigma_k$ and
$\Sigma_A(\Z)=\Sigma_A$. The first $n$ integers are denoted by $[n]:=\{0,\dots,n-1\}$ and the floor and ceiling
operation for $x\in \R$ is denoted by $\lfloor x \rfloor$ respectively $\lceil x \rceil$. By $\Sigma_s^n$ we denote all
$s-$sparse vectors $x$ in $\R^n$ respectively $\C^n$, i.e., satisfying $|\supp (x)|\leq s$.  For recovery results for
unknown $x\in \Sigma_s^n$ with $y$  not sparse and randomly  chosen, see \cite{Rauhut1,KMR13}; for the related
time-varying setting see \cite{Pfander1,Pfander2,Pfander3,KMR13}.
%

\section{Compressed Representation of Sparse Convolution} %

The support of the convolution of the functions $x$ and $y$ is contained in the Minkowski sumset $\supp (x) + \supp
(y)=I+J$ with $I,J\subset G$, see for example \cite[Theorem 1.1.6]{Rud62}.  To derive a compression result, we  map $I$
and $J$ to smaller sets that respect the additive structure of $I\cup J$.  Additive set theory provides exactly such a mapping,
which is known as a \emph{Freiman isomorphism}, see, e.g., \cite[Definition 5.21]{Tao06}.  
%
\begin{definition}[Freiman isomorphism]\label{def:freiman}
 Let $G$ and $\tG$ be Abelian groups and $A\subset G$. A  map $\phi:A\to \tG$ which satisfies
 \begin{align}
   a_1+a_2=a'_1 +a'_2 \quad\Longleftrightarrow \quad\phi(a_1)+\phi(a_2)=\phi(a'_1)+\phi(a'_2)\quad\quad,\quad a_1,a_2,a'_1,a'_2\in A \label{eq:f2isodef}
 \end{align}
 is called a \emph{Freiman isomorphism of order $2$ from $A$ into $\tG$}. 
\end{definition} 
%
\begin{remark}
  A \emph{Freiman homomorphism} $\phi_A :A\to \tG$ of order $2$  is defined analogously with $\LRA$ replaced by $\RA$ in
  \eqref{eq:f2isodef}. Further, $\phi_A$ can always be extended to the sumset $A+A$, by defining $\phi_{A+A}:A+A\to \tG$
  with $a_1+a_2 \mapsto \phi_{A+A}(a_1+a_2)=\phi_A(a_1)+\phi_A(a_2)$, see \cite[Section 2.8]{Gry13}. And vice versa, if
  $0\in A$ then $\phi_{A+A}$ allows for the definition of the map $\phi_A(a):=\phi_{A+A}(a)-a^*$ for $a\in A$ and some
  $a^*\in \tG$, where $\phi_A$ is called normalized if $\phi_A(0)=0$.  Hence we call $\phi_{A+A}$ a Freiman homomorphism
  of $A+A$ if $\phi_A$ is a Freiman homomorphism of order $2$ from $A$ into $\tG$.  If $\phi_{A+A}$ is also injective,
  then the map $\phi_A$ is a Freiman isomorphism of order $2$ in the sense of the \defref{def:freiman}.  We will omit in
  the following the subscripts and the order whenever they are directly implied by the context, see also \cite[Section
  2.8]{Gry13}.
\end{remark}
The compression quality of a Freiman isomorphism for a non-empty set $A\subset G$ depends obviously on the additive
structure of $A$ in $G$, which is characterized by the \emph{Freiman dimension} $d:=\rk(\cU(A+A))$, which is defined as
follows. The universal ambient group $G'=\cU(A+A)$ of $A+A$ is the ambient group of $G$ which contains $A+A$ such that
every Freiman homomorphism $\phi:A\to \tG$ extends to a unique group homomorphism $\phi:G' \to \tG$. Hence, an universal
ambient group also has to respect the additive structure of the elements lying outside of $A$,  see \cite[p.2 and
p.305]{Gry13} or \cite[Def 5.37]{Tao06}. The Freiman dimension $d$ of $G$ is then given by the torsion-free rank of the
universal ambient group of $G$. For example, if $G'\simeq\Z$, we have $d=1$, which is the case if $A$ is a geometric
progression, see \cite{Gry13}.  This concept goes back to Tao and Vu in \cite{Tao06}. 
By a recent result of Grynkiewicz \cite[Thm. 20.10]{Gry13} there exists a Freiman isomorphism which maps $A$ into $[n]$.
The function $n=n(|A|,d)$ is monotone increasing  in $|A|$ and $d$.  Since we have to consider arbitrary sets, we seek for each
set  a bound on the Freiman dimension $d$, which fortunately can be given in terms of the cardinality $|A|$.
The following compression result for sparse convolutions is shown below.
%
\begin{theorem}\label{thm:compression}
  Let $s,f$ be natural numbers and $G$ a discrete torsion-free Abelian group. Then for any non-zero $\vx\in \Sigma_s
  (G)$ and $\vy\in \Sigma_f (G)$  let $A\subset G$ be the union of their support sets with cardinality $m\leq s+f-1$,
  where we assume that both sets  contain zero.  Then there exists a bijective  Freiman homomorphism $\phi\colon A+A\to
  \tA+\tA \subset [2n-1]$ with $n=\lfloor 2^{2(m-\sqrt{m})\log(m-\sqrt{m})}\rfloor$ for $m\geq 5$ and $n=\lfloor
  2^{m-2}+1\rfloor$ else. Further we have 
  \begin{align}
    (\vtx*\vty)(\tg)=\begin{cases}
      (\vx * \vy) (\phi^{-1}(\tg)) &, \tg\in \tA+\tA\\
      0 &, \tg\in \Z\setminus(\tA+\tA),
    \end{cases}
  \end{align}
  where the $s-$respectively $f-$sparse sequences $\tx,\ty\in\Sigma_{\tA}$ are given with $\tA:=\phi(A)\subset[n]$ and
  $2\ta^*:= \phi(0)$  by
  \begin{align}
    \vtx(\tg)=\begin{cases}
      \vx(\phi^{-1}(\tg+\ta^*))&,\tg\in\tA\\
      0&,\tg\in \Z\setminus \tA
    \end{cases}\quad\text{and}\quad    \vty(\tg)=\begin{cases}
      \vy(\phi^{-1}(\tg+\ta^*)) &,\tg\in \tA\\
      0&,\tg\in\Z\setminus\tA
    \end{cases}\label{eq:txty}.
  \end{align}
\end{theorem}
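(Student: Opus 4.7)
The plan is to reduce the construction of $\phi$ to an application of Grynkiewicz's compression theorem \cite[Thm.~20.10]{Gry13} and then to verify that the induced pullback of sparse sequences preserves convolution pointwise. The first step is to control the Freiman dimension of $A$ so that Grynkiewicz's bound depends only on $m$. Because $G$ is torsion-free, the subgroup $\langle A\rangle\leq G$ generated by $A$ is a finitely generated torsion-free abelian group; since $0\in A$ and $|A|=m$, it is free of rank at most $m-1$. The universal property then forces $\cU(A+A)$ to embed into $\langle A\rangle$, so the Freiman dimension of $A$ satisfies $d\leq m-1$.

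With this dimension bound in hand, Grynkiewicz's Theorem~20.10 yields a Freiman isomorphism of order two $\psi\colon A\to \tA\subset[n]$, where $n=n(m,d)$ specialized to $d\leq m-1$ gives the bound stated in the theorem: for $m\geq 5$ this produces the value $\lfloor 2^{2(m-\sqrt m)\log(m-\sqrt m)}\rfloor$ via Grynkiewicz's refined counting argument, while for $m<5$ the elementary bound $\lfloor 2^{m-2}+1\rfloor$ can be verified by direct construction. As noted in the remark after \defref{def:freiman}, $\psi$ extends uniquely to a Freiman homomorphism $\phi\colon A+A\to \tA+\tA$ via $\phi(a_1+a_2):=\psi(a_1)+\psi(a_2)$; bijectivity of $\phi$ onto $\tA+\tA\subset[2n-1]$ follows from the isomorphism condition~\eqref{eq:f2isodef}. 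Setting $\ta^*:=\psi(0)$ gives $\phi(0)=2\ta^*$ as required, and unwinding~\eqref{eq:txty} shows $\vtx(\tg)=\vx(\psi^{-1}(\tg))$ for $\tg\in\tA$ and analogously for $\vty$.

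It remains to check the convolution identity. Fix $\tg\in\tA+\tA$ and set $g:=\phi^{-1}(\tg)\in A+A$. Expanding directly,
\begin{align*}
(\vtx*\vty)(\tg)=\sum_{\tilde h\in\tA,\ \tg-\tilde h\in\tA}\vx(\psi^{-1}(\tilde h))\,\vy(\psi^{-1}(\tg-\tilde h)).
\end{align*}
The Freiman isomorphism property of $\psi$ implies that the pairs $(\tilde h,\tg-\tilde h)\in\tA\times\tA$ with $\tilde h+(\tg-\tilde h)=\tg$ are in bijection, under $\psi^{-1}\times\psi^{-1}$, with the pairs $(h,g-h)\in A\times A$ satisfying $h+(g-h)=g$; transporting the sum along this bijection yields $(\vx*\vy)(g)$. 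For $\tg\notin\tA+\tA$, the support constraints on $\vtx,\vty$ force every term to vanish.

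The main obstacle is neither the extension nor the convolution identity itself---both reduce to bookkeeping once the shift $\ta^*$ between $\psi$ and its extension $\phi$ is tracked carefully---but rather the reduction to, and precise specialization of, Grynkiewicz's compression bound. One must justify that the torsion-free-rank estimate $d\leq m-1$ actually produces the two-regime bound on $n$ stated above, which ultimately requires reading off explicit constants from \cite[Thm.~20.10]{Gry13} and, for the small $m$ regime, verifying directly that any $m$-element subset of a torsion-free abelian group admits a Freiman isomorphism into $[\lfloor 2^{m-2}+1\rfloor]$.
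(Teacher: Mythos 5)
Your reduction to Grynkiewicz's Theorem 20.10 and your verification of the convolution identity match the paper's strategy, but the step where you bound the Freiman dimension contains a genuine error, and it is precisely the step on which the stated value of $n$ depends. You argue that $\cU(A+A)$ embeds into $\langle A\rangle$, which is free of rank at most $m-1$, hence $d\leq m-1$. The embedding claim is false: the universal ambient group is the \emph{freest} group compatible with all Freiman homomorphisms of order $2$ on $A$, so $\langle A\rangle$ is a quotient of $\cU(A+A)$, not the other way around, and the Freiman dimension is in general much larger than the torsion-free rank of $\langle A\rangle$ (a Sidon set of size $m$ in $\Z$ has Freiman dimension $m-1$ even though $\langle A\rangle$ has rank $1$ --- the paper explicitly warns that $d$ can far exceed the linear dimension of $\spann(G)$). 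The conclusion $d\leq m-1$ does happen to be true for other reasons (the universal ambient group is generated by the $m-1$ nonzero elements of $A$), but it is too weak: plugging $d=m-1$ into Grynkiewicz's bound $d!^{2}(3/2)^{d-1}2^{m-2}+\cdots$ gives $n\approx 2^{2(m-1)\log(m-1)+O(m)}$, which does not reduce to the claimed $\lfloor 2^{2(m-\sqrt{m})\log(m-\sqrt{m})}\rfloor$.

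The missing ingredient is the paper's use of Tao--Vu \cite[Corollary 5.42]{Tao06}: any $\td\geq 1$ with $\min(|A+A|,|A-A|)\leq(\td+1)|A|-\td(\td+1)/2$ bounds $d$, and combining this with the universal estimate $\min(|A+A|,|A-A|)\leq |A|^{2}/2-|A|/2+1$ and solving the resulting quadratic yields $d\leq m-\lfloor\sqrt{2(m-1)}+1/2\rfloor\leq m-\sqrt{m}-1$ for $m\geq 5$. That $\sqrt{m}$ saving is exactly what allows $d!^{2}(3/2)^{d-1}2^{m-2}$ to be absorbed into the exponent $2(m-\sqrt{m})\log(m-\sqrt{m})$, and with $d=1$ for $m\leq 4$ it recovers the Konyagin--Lev value $2^{m-2}$. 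You also leave the normalization of $\psi$ implicit: Grynkiewicz controls only $\diam\psi(A)$, so one must subtract $\min_{a\in A}\psi(a)$ to land in $[n]$, which is where the shift $\ta^{*}$ with $\phi(0)=2\ta^{*}$ actually comes from. The convolution identity itself you handle correctly and essentially as the paper does.
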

\begin{remark}
  If $G$ is not the universal ambient group of $A+A$, then $\phi$ cannot be extended to a group isomorphism between $G$
  and $\Z$. But, since the support of $x$ and $y$  is finite, we only need a ``local'' group isomorphism on the sumset
  of the supports. Therefore, the action of the convolution in $C_c(G)$ is fully described by sparse sequences over $\Z$
  and Freiman isomorphisms.  The theorem is an extension from $G=\Z$ in \cite{JW14b} of some of the authors to arbitrary
  discrete Abelian groups like $(\Q,\Omi_{\text{dis}})$ and $(\R,\Omi_{\text{dis}})$. Note, that $\Q$ is not finitely
  generated and $\R$ is not even countable.\\ The assumption that the support sets of $x$ and $y$ contain zero is not a
  restriction since one can independently shift $x$ and $y$ such that both shifted support sets contain zero. But
  shifting $x$ and $y$ results in a shift of the convolution $x*y$ which again can be realised by a Freiman isomorphism.  
\end{remark}

\begin{proof}
The support for $x\in \Sigma_s(G)$ and $y\in\Sigma_f (G)$ is contained in some non-empty subsets
$I=\{i_0,i_1,\dots,i_{s-1}\}$ respectively   $J=\{j_0,j_1,\dots,j_{f-1}\}$ of $G$, where we can assume that $i_0=0=j_0$.
Setting $A=I\cup J$  we get $\supp(x*y) \subset A+A$ with  $|A|\leq s+f-1$.\\ 
To apply  Grynkiewicz's compression result, we need an upper bound for the Freiman dimension $d$. Note, that the Freiman
dimension can be much larger then  the  linear dimension of $\spann(G)$. Fortunately, by a result\footnote{
Note, that Tao and Vu \cite[Definition 5.40]{Tao06} define the Freiman dimension of order $2$ by $\dim(A):=d-1$.}%
of Tao and Vu \cite[Corrolary 5.42]{Tao06}  it holds $d\leq\td$ for $A\subset G$ if there exists some integer $\td\geq
1$ such that
\begin{align}
  \min(|A+A|,|A-A|) \leq (\td+1)|A| - \frac{\td(\td+1)}{2}.
\end{align}
Moreover, we know that $\min(|A+A|,|A-A|)\leq |A|^2/2 -|A|/2 +1$, given by an upper bound for the difference constant of
$A$, see e.g., \cite[pp.57]{Tao06}. Hence we aim to find $\td\geq 1$ such that
\begin{align}
  & \frac{|A|^2}{2} -\frac{|A|}{2} +1 \leq (\td+1)|A| - \frac{\td(\td+1)}{2}\label{eq:doublingtao2}.
\end{align}
This follows from
\begin{align}
 & |A|^2 - 3|A| +2 + \frac{1}{4} \leq 2 \td |A| - \td(\td+1),
\end{align}
that is, 
\begin{align}
   & \td^2 + \td(1-2|A|) + \Big(|A|^2 -3|A| +\frac{9}{4}\Big) \leq 0\label{eq:quadratic}.
\end{align}
If we have equality in \eqref{eq:quadratic} then certainly $\td$ fulfills \eqref{eq:doublingtao2}. Hence we get for the
smallest solution:
\begin{align}
  \td_- &=|A| -\frac{1}{2}-\sqrt{\frac{(1-2|A|)^2}{4} - |A|^2 +3|A| - \frac{9}{4}} =
  |A|-\frac{1}{2}-\sqrt{2(|A|-1)}.  
\end{align}
Since $\td$ should be an integer we take the ceiling respectively floor operation and obtain for every $|A|\geq 1$
\begin{align}
  d=\lceil \td_-\rceil = |A| -\lfloor \sqrt{2(|A|-1)} + 0.5\rfloor\geq 1 \label{eq:dtao}.
\end{align}
A result by  Grynkiewicz \cite[Theorem 20.10]{Gry13} implies with $A_1=A_2=A$, $A_1\cup A_2 =A$ and
$m=|A|=s+f-1$, that there exists an injective Freiman homomorphism $\psi:A+A \to \Z$ such that 
\begin{align}
  \diam \psi(A):=\max\psi(A)-\min\psi(A) \leq \bigg\lfloor d!^2 \left(\frac{3}{2}\right)^{d-1} 2^{m-2} 
  + \frac{3^{d-1} -1}{2}\bigg\rfloor=:n', \label{eq:gryn}
\end{align}
where $\psi$ is given by  a Freiman isomorphism $\psi_A$  of order $2$ for $A$, 
\begin{align}
  \psi(a_1 +a_2)=\psi_A(a_1) + \psi_A(a_2),\quad a_1,a_2\in A.
\end{align}
Since $0\in A$ we have also
\begin{align}
  \diam \psi(A)=\diam\psi(A + 0) =\diam (\psi_A (A) + \psi_A(0))=\diam \psi_A(A)\leq n'.
\end{align}
But $\psi_A$ and therefore $\psi$ are not necessarily normalized, i.e., we can not assume $\psi_A(0)=0$. Hence we search
for the smallest integer in the image:
\begin{align}
  \psi_A(a^*)=\min_{a\in A} \psi_A(a) \in \Z.
\end{align}
Then $\psi_A(a^*)+\psi_A(a^*)$ is also the smallest integer in the image $\psi(A+A)$. Now we can define
\begin{align}
  \phi_A(a):=\psi_A(a) -\psi_A(a^*),\quad a\in A
\end{align}
which is again a Freiman isomorphism of order $2$ for $A$ with the property $0\in \tA:=\phi_A(A)\subset [n'+1]$,
inducing the bijective Freiman homomorphism $\phi: A+A\to \tA+\tA$ given for $a_1,a_2\in A$ as 
\begin{align}
  \phi(a_1+a_2):=\phi_A(a_1)+\phi_A(a_2)\label{eq:bijecfrei}
\end{align}
with $0\in\tA+\tA\subset[2n'+1]$ and $\tA\subset \tA+\tA$. But note that $\phi(A)=\tA+\phi_A(0)\subset [2n'+1]$ and
therefore by bijectivity we have $A=\phi^{-1}(\tA +\phi_A(0))$ where we set $\ta^*:=\phi_A(0)=\phi(0)/2$. 
 
A worst case bound for $d\leq m-\sqrt{m}-1$ in \eqref{eq:dtao}, can be used for $m\geq 5$, see \appref{app:dbound},
which gives%
\footnote{The natural logarithm is denoted by $\ln$ and the binary by $\log$.} 
$d!\leq 2^{(m-\sqrt{m})\log(m-\sqrt{m}) - (m-\sqrt{m}-1)/\ln 2}$, see \appref{app:logfac}. With this we can bound
\eqref{eq:gryn} by
%
\begin{align}
    \diam(\tA) \leq n' & < d!^2 \left(\frac{3}{2}\right)^{m-\sqrt{m}-2}\cdot 2^{m-2} + \frac{3^{m-\sqrt{m}-2}}{2}=
    (d!^2 \cdot 2^{\sqrt{m}} +2^{-1}){3^{m-\sqrt{m}-2}}\label{eq:taofac}\\
                                &    <( 2^{2(m-\sqrt{m})\log(m-\sqrt{m}) -2(m -\sqrt{m}-1)/\ln 2  +\sqrt{m} } +2^{-1})
                                3^{m-\sqrt{m}-2}\label{eq:last2-1}
\end{align}
using the bound  $2 <2/\ln 2$, which already absorbs the $2^{-1}$ summand in \eqref{eq:last2-1}, we get
\begin{align}
  n' &< 2^{2(m-\sqrt{m})\log( m-\sqrt{m})}\cdot \underbrace{2^{-2m+3\sqrt{m} +2} 3^{m-\sqrt{m}-2}}_{=2^{-m(2-\log 3) +
    \sqrt{m}(3-\log 3)+2 -2\log 3}< 1}< \underbrace{\lfloor 2^{2(m-\sqrt{m})\log( m-\sqrt{m}}) \rfloor}_{=:n}-1 \label{eq:n2},
\end{align}
since for $m\geq 5$ the exponent is less then zero. This estimation also swallows at least an integer of magnitude $2$,
justifying the last estimation. 
For $m\in \{1,2,3,4\}$ we get the conjectured bound of Konyagin and Lev \cite{KL00} $n'\leq \lfloor 2^{m-2}\rfloor=n-1$,
which is also tight, see \appref{app:klbound}.  Let us define for each $x,y\in \Sigma_s(G)\times\Sigma_f(G)$  the
sequences $\tx,\ty\in\Sigma_{[n]}$ as in \eqref{eq:txty}, having support contained in $\tA$. Hence, $(\tx*\ty)(\tg)=0$
for $\tg\in \Z\setminus (\tA+\tA)$ and $(x*y)(g)=0$ for $g\in G \setminus(A+A)$ by \eqref{eq:bijecfrei}.  Let
$\tg=\ta_1+\ta_2\in \tA+\tA$, then $\phi^{-1}(\tg)=\phi_A^{-1}(\ta_1)+\phi_A^{-1}(\ta_2)=a_1+a_2=g\in A+A$ and we get 
\begin{align}
  (x*y)(\phi^{-1}(\tg))&\overset{\eqref{eq:convolution}}{=}
    \int_G x(h)y(g-h)d\lam(h)\overset{\eqref{eq:finitemany}}{=}\sum_{h\in A\cap(g-A)}x(h)y(g-h).
\end{align}
  If $h\in A\cap (g-A)$ then there exist $a_1, a_2\in A$ such that $h=a_1=g-a_2$. Hence the sum over $h$ is a sum over
  $a_1,a_2\in A$ which satisfy $a_1+a_2=g$. But this addition is exactly preserved under the Freiman isomorphism
  $\phi_A$.  Therefore we get \vspace{-0.1cm}
\begin{align}
  (x*y)(\phi^{-1}(\tg)) &\overset{\phantom{\eqref{eq:bijecfrei}}}{=}\sum_{\substack{a_1,a_2\in A\\ a_1+a_2=g}} 
   x(a_1) \ y(a_2)
   =\sum_{\substack{a_1,a_2\in A\\ a_1+a_2=g}}  x(\phi_A^{-1}(\phi_A(a_1)) \cdot y(\phi_A^{-1}(\phi_A(a_2))\\ 
 &\overset{{\eqref{eq:bijecfrei}}}{=}
    \sum_{\substack{a_1,a_2\in A\\ a_1+a_2=g}} x(\phi^{-1}(\phi_A(a_1) +\ta^*) \cdot y(\phi^{-1}(\phi_A(a_2)+\ta^*)\\ 
 & \overset{\eqref{eq:txty}}{=}
 \sum_{\substack{a_1,a_2\in A\\ a_1+a_2=g}} \tx(\underbrace{\phi_A(a_1)}_{=\ta_1}) \
 \ty(\underbrace{\phi_A(a_2)}_{\ta_2})
      \overset{\eqref{eq:f2isodef}}{=}\sum_{\substack{\ta_1,\ta_2\in \tA\\ \ta_1+\ta_2=\tg}}  \tx(\ta_1) \ \ty(\ta_2)\\
&\overset{\phantom{\eqref{eq:txty}}}{=}\sum_{\ta_1\in\tA}\tx(\ta_1)\ \ty(\tg-\ta_1))
   =\sum_{\thh\in\Z} \tx(\thh)\ \ty(\tg-\thh))=(\tx*\ty)(\tg)
\end{align}
\end{proof}
The smallest cardinality of $A$ is obtained if  a translation of the support of $x$ is contained in the support of $y$
or vice versa, which yields $m=\max\{|\supp(x)|,|\supp(y)|\}$. This can occur, e.g., if we consider the auto-convolution
$x=y$ or demand $x,y\in\Sigma_k$ as investigated by some of the authors in \cite{JW14b}, which yields then an upper bound for $n$ of
$\lfloor2^{2(k-\sqrt{k})\log(k-\sqrt{k})}\rfloor$.

\section{A Reverse Young inequality}

\thmref{thm:compression} shows that for each pair $(x,y)\in \Sigma_s(G) \times
\Sigma_f(G)$ there exists $(\tx,\ty)\in \Sigma^n_s \times \Sigma^n_f$ with $n=\lfloor
2^{2(s+f-1 -\sqrt{s+f-1})\log(s+f-1-\sqrt{s+f-1})}\rfloor$ such that $x$ and $\tx$, $y$ and $\ty$, and $x*y$ and
$\tx*\ty$ have the same values counting multiplicity and hence it holds for all $0<p,q,r\leq \infty$
\begin{align}
  \Norm{x * y}_{\ell^r (G)} &= \Norm{\tx * \ty}_{\ell^r_{[2n-1]}},\quad  \Norm{\vx}_{\ell^p
  (G)}=\Norm{\tx}_{\ell^p_{[n]}}\quad\text{and}\quad \Norm{y}_{\ell^q (G)}=\Norm{\ty}_{\ell^q_{[n]}}.\label{eq:normeq}
\end{align}
In fact, we can shift $x\in \Sigma_s(G)$ and $y\in \Sigma_f(G)$ such that their support sets $I$ respectively $J$
contain zero, without changing the values of the convolution.
This result allows us to derive the infimum of the norm of the convolution over all normalized $(s,f)-$sparse signals as
the minimum over all $(s,f)-$sparse signals on the spheres in $\Sigma^n_s$ and $\Sigma^n_f$, since the latter form a
finite union of compact sets. Also this allows the usage of linear algebra tools to derive an explicit norm lower bound,
i.e., a reverse Young inequality for sparse convolutions. We will show in \thmref{thm:numericbound} that each $\vty\in
\Sigma_f^n$ together with its left and right shifts generates a $(2n-1)\times n$ matrix $S_{\vty}$ whose \emph{smallest
$s-$sparse eigenvalue} \cite{RZ13} (singular value of $S_{\vty}$)
\begin{align}
  \rhomin(s,S_{\ty}) := \min_{\vtx\in \Sigma_s^n\setminus\{0\}} \frac{\Norm{S_{\ty} \vtx}_2^2}{\Norm{\vtx}_2^2}
  =\lammin(s,S^*_{\ty}S_{\ty}) \label{eq:sparseeigenvalue} 
\end{align}
provides a norm lower bound $\alp(s,f)$.  If this bound is larger zero, each sensing matrix $S_{\ty}$ can recover the
$s-$sparse signal $\tx\in \Sigma_s^n$ from $2n-1$ samples. 
\thmref{thm:numericbound} below gives an explicit lower bound on $\alp(s,f)$. 
\begin{theorem}\label{thm:numericbound}
  Let $G$ be a torsion-free discrete Abelian group and $s,f$ natural numbers. Then it holds for every $\vx\in
  \Sigma_s(G)$ and $\vy\in \Sigma_f(G)$
  \begin{align}
    \alp(s,f) \Norm{\vx}_2\Norm{\vy}_1 \leq \Norm{\vx * \vy}_2  \leq \Norm{\vx}_2\Norm{\vy}_1,
    \label{eq:dryibound}
  \end{align}
  where the lower bound satisfies 
  \begin{align}
    \alp^2(s,f)& = 
    \min_{\substack{\tx\in \Sigma^n_s, \ty\in \Sigma^n_f\\ \Norm{\tx}_2=\Norm{\vty}_1=1}} \Norm{\tx *\ty}^2_2=
    \min_{\substack{\ty\in \Sigma_f^n\\ \Norm{\ty}_1=1}}
    \rhomin (s,S_{\ty})\label{eq:nphard}
  \end{align}
  with $n=n(s,f)=\lfloor 2^{2(s+f-1-\sqrt{s+f-1})\log_2(s+f-1-\sqrt{s+f-1})}\rfloor$. The problem in \eqref{eq:nphard} is
  NP hard. %
\end{theorem}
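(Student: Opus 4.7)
The plan is to split the statement into four sub-assertions and dispatch them in order: the upper bound in \eqref{eq:dryibound}, the existence of a strictly positive lower bound, the two equalities in \eqref{eq:nphard}, and finally the NP-hardness. The upper bound is nothing but Young's inequality \eqref{eq:young} with $p=2$, $q=1$, $r=2$ (note $1/2+1=1+1/2$), which is valid on every LCA group and requires no sparsity or torsion-freeness.

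For the lower bound, the key tool is \thmref{thm:compression}. For any nonzero $(\vx,\vy)\in\Sigma_s(G)\times\Sigma_f(G)$, after an innocuous shift so that both supports contain zero, the compressed pair $(\vtx,\vty)\in\Sigma_s^n\times\Sigma_f^n$ with $n=n(s,f)$ satisfies $\Norm{\vtx}_2=\Norm{\vx}_2$, $\Norm{\vty}_1=\Norm{\vy}_1$ and $\Norm{\vtx*\vty}_2=\Norm{\vx*\vy}_2$ by \eqref{eq:normeq}. Hence the infimum of $\Norm{\vx*\vy}_2/(\Norm{\vx}_2\Norm{\vy}_1)$ over the infinite-dimensional domain $\Sigma_s(G)\times\Sigma_f(G)$ coincides with the infimum over the finite-dimensional set $\Sigma_s^n\times\Sigma_f^n$. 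The corresponding normalized feasible set is a finite union of compact subsets of $\C^n\times\C^n$, one per choice of $s$- and $f$-element supports in $[n]$, and $(\vtx,\vty)\mapsto\Norm{\vtx*\vty}_2^2$ is continuous, so the infimum is attained, giving the first equality in \eqref{eq:nphard}. Positivity of $\alp(s,f)$ then reduces to showing $\vtx*\vty\neq 0$ whenever both factors are nonzero, which is immediate from the fact that $\C[\Z]\cong\C[z,z^{-1}]$ is an integral domain.

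For the second equality in \eqref{eq:nphard}, I would fix $\vty$ and represent the linear map $\vtx\mapsto\vtx*\vty$ by the $(2n-1)\times n$ matrix $S_{\vty}$ whose $k$-th column is the $k$-shift of $\vty$. Then $\Norm{\vtx*\vty}_2^2=\Norm{S_{\vty}\vtx}_2^2$, and the inner minimum over normalized $s$-sparse $\vtx$ is exactly $\rhomin(s,S_{\vty})$ by the definition \eqref{eq:sparseeigenvalue}; minimizing over $\vty\in\Sigma_f^n$ with $\Norm{\vty}_1=1$ yields the claimed variational formula.

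The main obstacle is the NP-hardness assertion. I would derive it by reduction from the computation of the smallest $s$-sparse eigenvalue of a general symmetric positive semidefinite matrix, which is known to be NP-hard \cite{RZ13}. The delicate point is that the matrices $S_{\vty}^*S_{\vty}$ appearing here are constrained to a convolutional Toeplitz-like form, so a bare black-box reduction does not apply; rather, one has to exhibit a polynomial-time construction that realises a given hard sparse PSD eigenvalue instance as $S_{\vty}^*S_{\vty}$ for some admissible $\vty$ and appropriately chosen $s,f,n$. Fleshing out this embedding, or alternatively quoting a sparse eigenvalue hardness result already specialized to matrices of convolutional form, is where I expect the bulk of the technical work to lie.
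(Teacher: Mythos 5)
Your treatment of the first three assertions is correct and essentially identical to the paper's: the upper bound is Young's inequality with $(p,q,r)=(2,1,2)$; the compression theorem together with \eqref{eq:normeq} reduces the infimum over $\Sigma_s(G)\times\Sigma_f(G)$ to a minimum over the finitely many compact support cells of $\Sigma_s^n\times\Sigma_f^n$ (there are $\binom{n}{s}\binom{n}{f}$ of them); and writing $\Norm{\tx*\ty}_2^2=\Norm{S_{\ty}\tx}_2^2$ via the $(2n-1)\times n$ shift matrix identifies the inner minimum with $\rhomin(s,S_{\ty})$. Your positivity argument via the integral domain $\C[z,z^{-1}]$ is a clean way of making explicit a point the paper leaves implicit.

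The gap is the NP-hardness claim, which you concede you have not proved. The paper does not take your route (reduction from the smallest $s$-sparse eigenvalue of a general PSD matrix); instead it observes that $\skprod{\tx,B_{\ty}\tx}$ with $B_{\ty}=S_{\ty}^{*}S_{\ty}$ is a bi-quadratic form in the pair $(\tx,\ty)$, so that \eqref{eq:nphard} is a bi-quadratic optimization problem over spheres, and it invokes Theorem 2.2 of \cite{LNQY09}, which establishes NP-hardness for that class; equivalence of norms on finite-dimensional spaces then transfers the statement from the $\ell^2$-sphere to the $\ell^1$-sphere in the $\ty$-variable. Be aware that the structural objection you raise against your own reduction --- that the admissible quadratic forms are constrained to convolutional Toeplitz type, so hardness of the unrestricted problem does not automatically yield hardness of the restricted one --- applies with equal force to the paper's citation: neither your sketch nor the paper exhibits an explicit polynomial-time embedding of a known hard instance into the constrained family. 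You have therefore correctly located the delicate point, but to complete the argument along the paper's lines you should recast \eqref{eq:nphard} as a bi-quadratic program and appeal to \cite{LNQY09} rather than attempting a reduction from sparse eigenvalue computation as in \cite{RZ13}.
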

\begin{proof}
By \thmref{thm:compression} there exists an injective Freiman homomorphism $\phi$ with images
$\tA=\phi(A)\subset[n]$ and sequences $\tx,\ty\in \Sigma_{[n]}$ satisfying \eqref{eq:normeq}. Since $\tx$ and $\ty$ are
$s-$ respectively $f-$sparse in $\Sigma_{[n]}$ we can identify them as vectors in $\C^n$.  Hence our optimization
problem reduces to
\begin{align}
 \alp^2(s,f)&=\inf_{\substack{x\in \Sigma_s, y\in \Sigma_f\\ \Norm{\vx}_2=\Norm{\vy}_1=1}} \Norm{x * y}^2_2=
 \min_{\substack{\tx\in \Sigma^n_s,\ty\in \Sigma^n_f\\\Norm{\tx}_2=\Norm{\vty}_1=1}}\Norm{\tx *\ty}^2_2.\label{eq:infmin}
\end{align}
Here the infimum is realized by a minimum, since our compression result implies that we only need to consider finitely
many support combinations, that is, $\binom{n}{s}\binom{n}{f}$ combinations. Equality in \eqref{eq:infmin} is
justified since each $(s,f)-$sparse support combination in $\Sigma^n_s\times\Sigma^n_f$ also occurs in $\Sigma_s
\times \Sigma_f$. We compute
\begin{align}
   \Norm{\tx*\ty}^2_2 &=\sum_{\tg\in \Z}\Big|\sum_{\thh\in \Z} \tx(\thh) \ty(\tg-\thh)\Big|^2\\
                      &=\sum_{\tg\in \Z} \sum_{\thh,\thh'} \tx(\thh) \cc{\tx(\thh')}\ty(\tg-\thh)\cc{\ty(\tg-\thh')}.  
\intertext{Setting  $\tg=g+\thh$ and changing the order of the sums yields} 
      &=\sum_{\thh,\thh'\in\Z}  \tx(\thh) \bigg(\sum_{g\in\Z} \ty( g+(\thh'-\thh)) \cc{\ty(g)} \bigg) \cc{\tx(\thh')}\\
      &=\sum_{\thh,\thh'\in\Z} \tx(\thh) b_{\ty}(\thh'-\thh)  \cc{\tx(\thh')}, 
\end{align}
where $b_{\ty} (k)$ is the  autocorrelation of $\ty$.  The first $n$ samples
$(b_{\ty}(0),b_{\ty}(1),\dots,b_{\ty}(n-1))$ of the autocorrelation generate the $n\times n$ positive Hermitian Toeplitz
matrix  $B_{\ty}$. Introducing the $(2n-1)\times n$ shift matrix $(\tY)_{lk}=\ty_{k-l}$ of $\ty$ for $l\in
[2n-1],k\in[n]$, which  is a  matrix containing in each row a translate of $\vty$, we have $B_{\ty}={\tYstar} \tY$. Hence
we get
\begin{align}
  \Norm{\tx*\ty}_2^2&=\skprod{\tx,B_{\ty}\tx}_{\C^n}=\skprod{\tY\vtx,\tY\vtx}_{\C^{2n-1}}=\Norm{\tY\vtx}_2^2\label{eq:bijhxhy}.
\end{align}
The minimum of \eqref{eq:bijhxhy} over all normalized $s-$sparse vectors $\tx$ is referred to the smallest
\emph{restricted eigenvalue} of $\tY$ as introduced in \cite{BRT09} or more precisely to the \emph{smallest $s-$sparse
eigenvalue} \cite{RZ13}
\begin{align}
  \rhomin (s,\tY) = \min_{\vtx\in \Sigma_s^n\setminus\{0\}} \frac{\Norm{\tY\vtx}_2^2}{\Norm{\vtx}_2^2} =
  \min_{\vtx\in \Sigma_s^n, \Norm{\vtx}_2=1} \skprod{\vtx,B_{\ty} \vtx}.
\end{align}
For each $s-$dimensional subspace in $\C^n$ and $B_{\ty}$ this is a \emph{quadratic optimization problem}.
Considering the minimum over all these quadratic optimization problems, generated by $\ty$ on the $\ell^2-$sphere of
some $f-$dimensional subspace, yields a bi-quadratic optimization problem, which was shown to be NP hard, see Theorem 2.2
in \cite{LNQY09}. Since the norms on finite dimensional spaces are equivalent, we also have NP-hardness on the
$\ell^1-$sphere:
\begin{align}
  \alp^2(s,f)=\min_{\ty\in \Sigma_f^n, \Norm{\ty}_1=1} \min_{\tx \in \Sigma_s^n, \Norm{\tx}_2=1}
  \skprod{\tx,B_{\ty}\tx}.\label{eq:biopt}
\end{align}
\end{proof}

\section{Estimation of a Lower Bound} \label{sec:lowerbound}

\thmref{thm:numericbound} shows that computing sharp lower bounds $\alp(s,f)$ is NP hard. To obtain analytic bounds, we
have to separate the bi-quadratic optimization problem \eqref{eq:biopt}. Our result is then based on independent
minimization problems over $\Sigma_s^n$ respectively $\Sigma_f^n$, where we used thhe \namen{Bernstein} inequality and an eigenvalue
estimate of Fourier minors to obtain a lower bound.  Here, the $N\times N$ Fourier matrix $F_N$  is given by
\begin{align}
  (F_N)_{lk} = e^{-2\pi i \frac{lk}{N}},\quad\quad l,k \in [N],\label{eq:fmatrix} 
\end{align}
and will serve as an approximation for the Fourier series  of $x\in \C^N$
\begin{align}
  \widehat{x}(\ome)=\sum_{k=0}^{N-1} x_k e^{-2\pi i k\ome} ,\quad\quad \ome \in[0,1).
\end{align}

%
\begin{theorem}\label{thm:lowerbound}
  Let $s,f$ and $n$ be integers, then it holds for all $x\in \Sigma_s^n$ and
  $y\in \Sigma_f^n$  %
  \begin{align}
    \alp(s,f,n) \Norm{x}_2 \Norm{y}_1 \leq \Norm{x*y}_2
  \end{align}
  with lower bound satisfying
  \begin{align}
    \alp(s,f,n) >2^{-f^2\log_2 \frac{sf}{4} +f\log_2\frac{s}{2} -\frac{3}{2}\log_2(4 f)}  n^{-f^2+f-1}.  
  \end{align}
  If $s=f$ we get the following scaling behaviour 
  \begin{align}
    \alp(s,s,n) > 2^{-(2s^2-s+1)\log\frac{s}{2}} n^{-s^2+s-\frac{1}{2}} .
  \end{align}
\end{theorem}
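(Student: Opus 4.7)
My plan is to pass to the Fourier side via Plancherel,
\[
\Norm{x*y}_2^2 \;=\; \int_0^1 |\widehat{x}(\omega)|^2\,|\widehat{y}(\omega)|^2\,d\omega,
\]
and lower bound this integral by combining a Bernstein-type pointwise lower bound for $|\widehat{x}|$ on a short interval with a Fourier-minor eigenvalue estimate for $\widehat{y}$.

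First, since $\Norm{\widehat{x}}_\infty \geq \Norm{\widehat{x}}_2 = \Norm{x}_2$, I would pick $\omega_0 \in [0,1)$ maximising $|\widehat{x}|$, so $|\widehat{x}(\omega_0)| \geq \Norm{x}_2$. As $\widehat{x}$ is a trigonometric polynomial of degree at most $n-1$, Bernstein's inequality $\Norm{\widehat{x}'}_\infty \leq 2\pi(n-1)\Norm{\widehat{x}}_\infty$ then implies $|\widehat{x}(\omega)| \geq \Norm{x}_2/2$ on the interval $I := [\omega_0 - \ell/2,\,\omega_0 + \ell/2]$ of length $\ell := 1/(2\pi(n-1))$. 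This yields the reduction
\[
\Norm{x*y}_2^2 \;\geq\; \tfrac{1}{4}\Norm{x}_2^2\int_I |\widehat{y}(\omega)|^2\,d\omega.
\]

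Second, with $L := \supp(y)$ of cardinality $f$, I would write
\[
\int_I|\widehat{y}|^2\,d\omega = \langle y_L,\,T_{I,L}\,y_L\rangle,\qquad (T_I)_{k,k'} = \ell\,\sinc\bigl(\pi(k-k')\ell\bigr)\,e^{-2\pi i(k-k')\omega_0},
\]
where $T_{I,L}$ denotes the $f\times f$ principal submatrix of $T_I$ indexed by $L$. The problem reduces to lower bounding $\lambda_{\min}(T_{I,L})$ of this Fourier minor. Factoring out a unitary diagonal gives $\lambda_{\min}(T_{I,L}) = \ell\,\lambda_{\min}(W_L)$ for the real positive-definite sinc matrix $W_{k,k'} = \sinc(\pi(k-k')\ell)$. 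Recognising $W_L$ as the Gram matrix of the exponentials $\ell^{-1/2}e^{2\pi i kt}$, $k\in L$, on $[-\ell/2,\ell/2]$, a Vandermonde-type estimate obtained by Taylor-expanding the exponentials in $t$ and applying the Vandermonde determinant formula in the frequencies $k \in L$ gives $\det(W_L) \gtrsim \ell^{f(f-1)}/c(f)$; combined with the trivial bound $\sigma_{\max}(W_L) \leq f$ (from $|W_{k,k'}|\leq 1$) this yields $\lambda_{\min}(W_L) \geq \det(W_L)/\sigma_{\max}(W_L)^{f-1} \gtrsim \ell^{f(f-1)}/c'(f)$. Together with the Cauchy-Schwarz inequality $\Norm{y_L}_2^2 \geq \Norm{y}_1^2/f$ and substituting $\ell \sim 1/n$, this produces the announced $n^{-(f^2-f+1)}$ scaling of $\alpha(s,f,n)$.

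The main obstacle is the quantitative Vandermonde / sinc-matrix eigenvalue bound: one needs an explicit, uniform lower bound on $\lambda_{\min}(W_L)$ over all $f$-subsets $L\subset [n]$ with precise $f$-dependence. The scaling $\ell^{f(f-1)}$ arises from the product $\prod_{i<j}(k_j-k_i)^2$ of pairwise frequency differences in the Vandermonde determinant of the Taylor-expanded exponentials; this is what drives the dominant $n^{-f(f-1)}$ factor in the theorem. Propagating the explicit constants from Bernstein's inequality, the Cauchy-Schwarz step, and the Vandermonde/Gram-matrix estimate then yields the dyadic prefactor $2^{-f^2\log_2(sf/4)+f\log_2(s/2)-(3/2)\log_2(4f)}$, with the $s$-dependence entering through the explicit constants in these steps.
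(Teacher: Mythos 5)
Your skeleton matches the paper's: Plancherel, a Bernstein localization showing $|\hx|^2$ stays large on an interval of length $\sim 1/(ns)$, and a Vandermonde-type eigenvalue estimate to keep $|\hy|^2$ away from zero there. But the second half of your argument diverges from the paper's and, as written, contains the genuine gap that you yourself flag. The paper does \emph{not} work with the continuous sinc Gram matrix $T_{I,L}$: it samples $\hy$ at $f$ equispaced points in $[0,1/(ns))$, obtains an \emph{exact} $f\times f$ Vandermonde matrix whose determinant is an explicit product of sines, bounds that product from below by elementary estimates ($\sin^2(t\pi/2)>t^2$), and converts the resulting pointwise bound on $\max|\hy|^2$ back into an integral bound by a second Bernstein triangle argument. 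Your route instead requires $\det(W_L)\gtrsim \ell^{f(f-1)}/c(f)$ for the sinc matrix, uniformly over all $f$-subsets $L\subset[n]$, \emph{with an explicit} $c(f)$. The heuristic (Taylor-expand the exponentials, extract the Vandermonde $\prod_{i<j}(k_i-k_j)^2$) identifies the correct leading term, but you must also show the higher-order Taylor remainders do not cancel it for $\ell n\sim 1/s$ bounded rather than $\to 0$, and you must make $c(f)$ explicit (it is a superfactorial-type quantity). Since the entire content of the theorem is the explicit constant, leaving this lemma as "the main obstacle" means the proof is not complete; this is precisely the step the paper's discretization is designed to avoid.

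Two smaller points. First, your Bernstein step is off by a factor: from $\Norm{\hx'}_\infty\leq 2\pi(n-1)\Norm{\hx}_\infty$ and $\Norm{\hx}_\infty\leq\sqrt{s}\,\Norm{x}_2$ you only get $|\hx|\geq\Norm{x}_2/2$ on an interval of length $\sim 1/(n\sqrt{s})$, not $1/(2\pi(n-1))$; the paper sidesteps this by applying Bernstein to the trigonometric polynomial $p_x=|\hx|^2$ of degree $n$ with $\Norm{p_x}_\infty\leq s$, getting half-width $1/(2ns)$. This is where the $s$-dependence of the interval enters and must be carried through. Second, if your sinc-matrix lemma were established, your route would actually yield $\alp^2\gtrsim \ell^{\,f^2-f+1}$, i.e.\ $\alp\gtrsim n^{-(f^2-f+1)/2}$ — \emph{better} in $n$ than the paper's $n^{-f^2+f-1/2}$, because integrating $|\hy|^2$ against the Gram matrix avoids the lossy triangle step (which squares the pointwise bound $c_y$ a second time). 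So the approach is promising and would strengthen the theorem, but as submitted it asserts rather than proves its key quantitative ingredient.
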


Before we prove this theorem, let us formulate the following universal statement, which holds for all sparse convolutions on
torsion-free discrete Abelian groups. 
%
\begin{corrolary}\label{cor:universal}
Let $G$ be a torsion-free discrete Abelian group and $s,f$ be integers, then it holds for all $x\in \Sigma_s(G)$
and $y\in \Sigma_f(G)$  %
\begin{align}
  \alp(s,f) \Norm{x}_2 \Norm{y}_1 \leq \Norm{x*y}_2
\end{align}
with lower bound satisfying
\begin{align}
  \alp(s,f) >2^{-f^2\log_2 \frac{sf}{4} +f\log_2 \frac{s}{2} -\frac{3}{2}\log_2 (4 f)}  n(s,f)^{-f^2+f-1},  
\end{align}
where $n(s,f)$ is given as in \thmref{thm:numericbound}. 
\end{corrolary}
%
\begin{proof}
By \thmref{thm:numericbound} we can restrict us to $x=\tx\in \Sigma_s^n$ and
$y=\ty\in \Sigma_f^n$ with $\Norm{x}_2=\Norm{y}_1=1$ and $\log n = -2(s+f-1-\sqrt{s+f-1})\log (s+f-1-\sqrt{s+f-1})$. %
Now we can apply \thmref{thm:lowerbound} and get the desired result.
\end{proof}

\begin{proof}[Proof of \thmref{thm:lowerbound}]
  If $s=1$ and $\Norm{x}_2=1$ we have $\Norm{x*y}_2=\Norm{y}_2$ and hence $\alp(1,f,n)=\min_{0\not=y\in
  \Sigma^n_f}\Norm{\vy}_2/\Norm{y}_1=\min_{0\not=y\in\Sigma_f^n} |\supp(y)|^{-1/2}=f^{-1/2}$. Similar, for $f=1$ we have
  $\Norm{x*y}_2=\Norm{x}_2$ and $\alp(s,1,n)=\min_{0\not=x\in\Sigma^n_s}\Norm{x}_2/\Norm{x}_2=1$. Hence we assume in the
  following $s,f\geq 2$. Then we get by the Parseval Theorem 
  \begin{align}
  \Norm{\vx * \vy}^2_2= \Norm{\hx \cdot \hy}^{2}_{L^2([0,1))}=\int_0^1 |\hx(\ome) \cdot \hy(\ome)|^2 d\ome.
  \end{align}

  Since $x$ and $y$ are vectors in dimension $n$, the absolute-square of the Fourier transforms $|\hx(\ome)|^2$ and
  $|\hy(\ome)|^2$ define real valued trigonometric polynomials of degree less than or equal to $n$, i.e.,  for
  $\ome\in[0,1)$ we get with the left--shift $\shiftl$ given by $S^l x(k)= x(k+l)$, 
\begin{align}
  p_x (\ome)&=|\hx(\ome)|^2=\Big|\sum_{k=0}^{n} x_k e^{-2\pi i k\ome}\Big|^2 \label{eq:pone} =\sum_{k,k'} x_k \cc{x}_{k'}
  e^{-2\pi i  (k-k')\ome} \\
  &=\sum_{l=-n}^{n} \underbrace{\skprod{x,\shiftl x}}_{=c_l=\cc{c}_{-l}}  e^{-2\pi i l \ome} 
  = 1 +  2\sum_{l=1}^{n} \Re(c_l) \cos(2\pi l\ome) + \Im(c_l) \sin(2\pi l \ome) \label{eq:sone}
\end{align}
where $\Re(z)$ and $\Im(z)$ denotes the real respectively imaginary part of $z\in \C$.  To estimate the maximum of the
trigonometric polynomial $p_x(\ome)$ we use the triangle inequality in \eqref{eq:pone} and the maximal support length $s$ of
$\vx$ in the Cauchy-Schwartz inequality to obtain the following bound:
\begin{align}
  \Norm{p_x}_{\infty} =\Norm{\hx}_{\infty}^2 \leq \big(\sum_{k}\chi_{\supp(x)} (k) |x_k|  \big)^2
  \leq \Norm{\chi_{\supp(x)}}_2^2 \cdot \Norm{x}_2^2 = s,
\end{align}
where $\chi_A$ is the characteristic function on $A\subset\Z$ given by $\chi_A(a)=1$ if $a\in A$ and zero else. \\
We bound the slope of $p_x$ using the \namen{Bernstein} inequality, see e.g. \cite{Zyg35}, by%
\footnote{Actually, this seems to be a rough estimation, but any other, like Riesz cosine \cite{Rie14} approximation can
not prevent the order $n$ dependence.}
\begin{align}
  \Norm{p'_x}_{\infty} \leq n\Norm{p_x}_{\infty}\leq ns. \label{eq:bernsteinx}
\end{align}
Since $p_x=|\hx|^2$  is continuous and $\Norm{\hx}^2_2=\Norm{x}^2_2=1$, there exist an
$\ome_0\in[0,1]$ such that $|\hx(\ome_0)|^2\geq 1$. 
%
\begin{figure}[ht]
  \hspace{0.4cm}
  \begin{subfigure}[b]{0.5\textwidth}
    \centering
    \includegraphics[width=\textwidth]{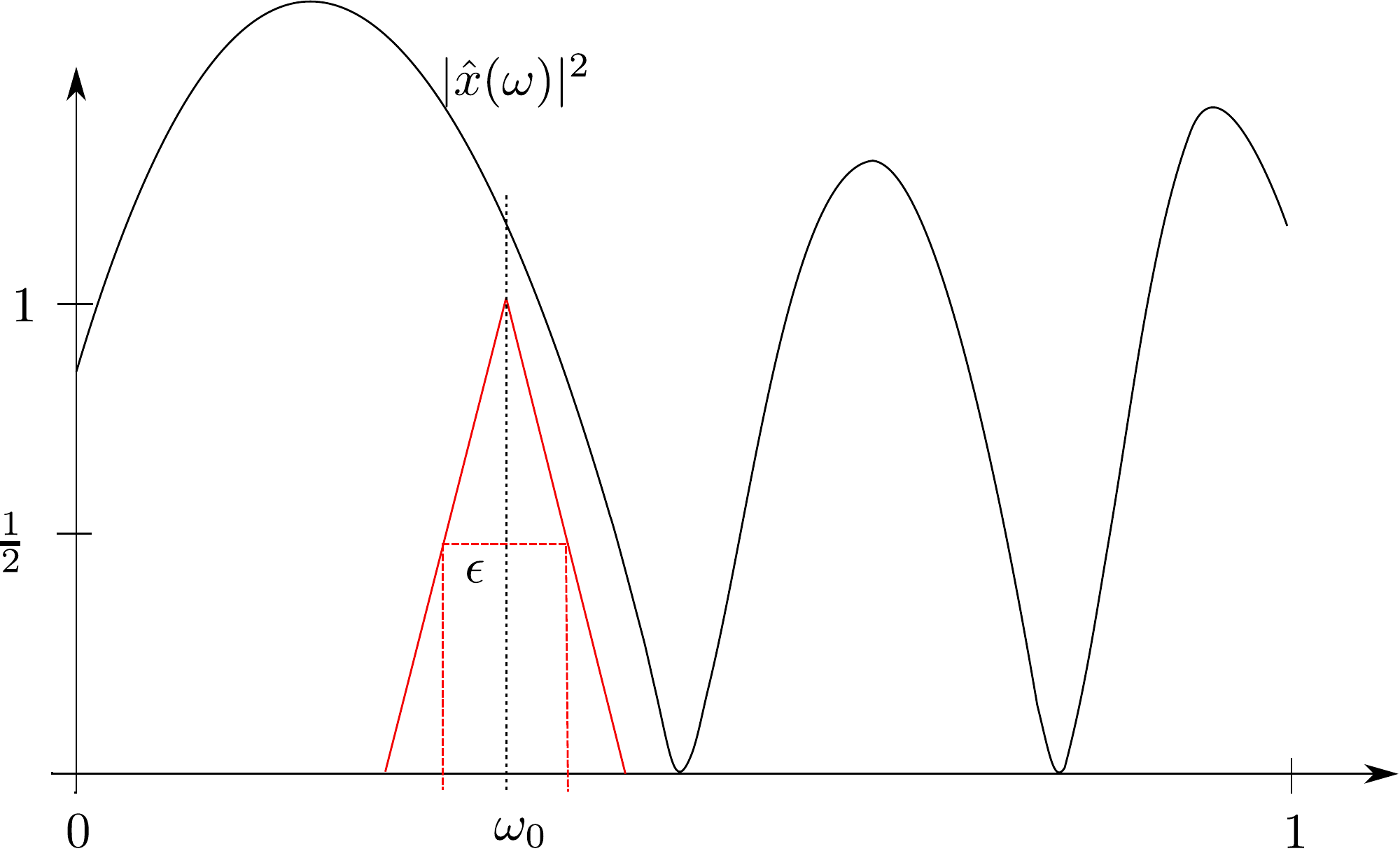}
    \caption{Bernstein inequality for $|\hat{x}|^2$}
  \end{subfigure}
  \hspace{1cm}\vspace{0.2cm}
\begin{subfigure}[b]{0.36\textwidth}
  \centering
  \includegraphics[width=\textwidth]{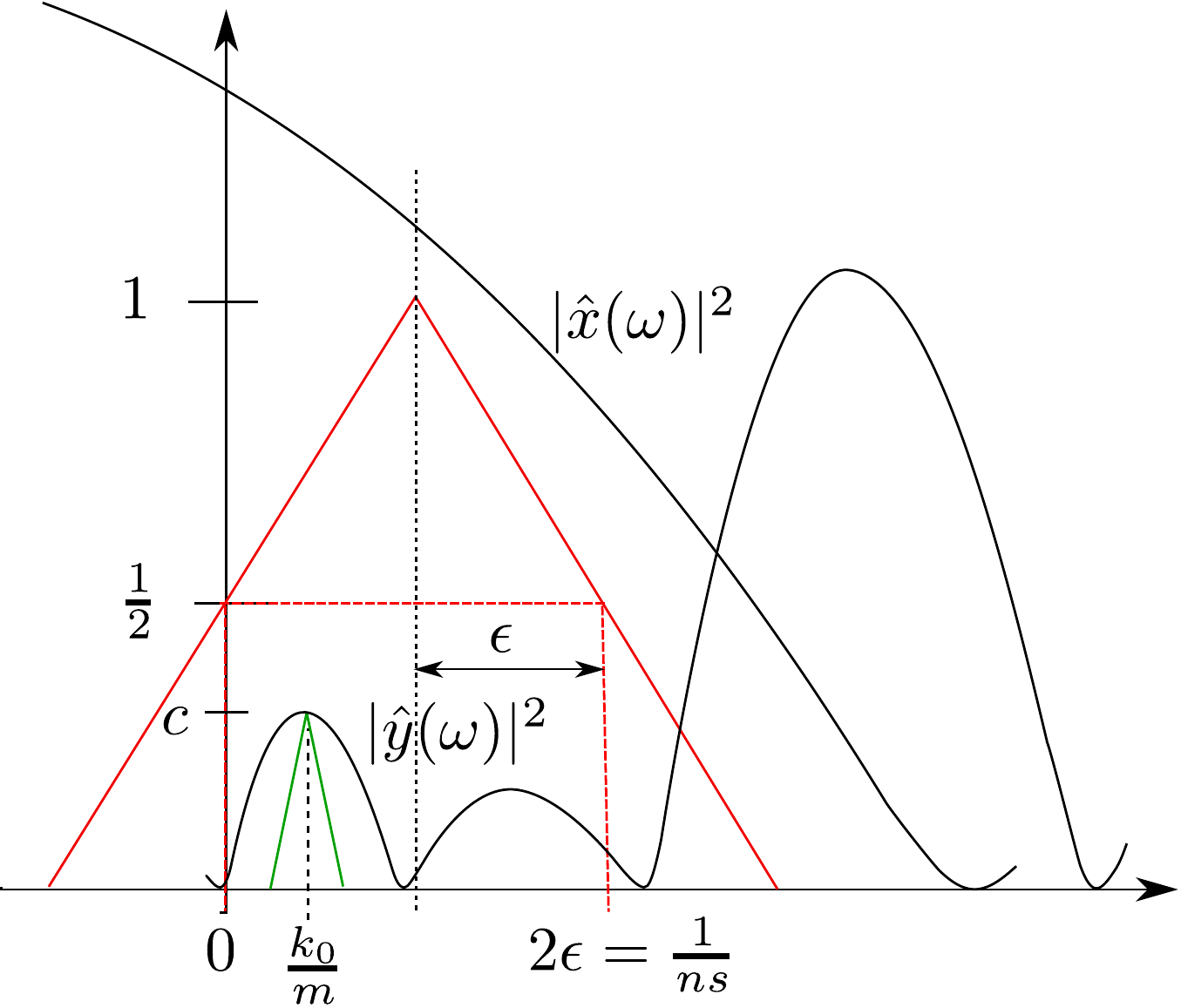}
  \caption{Bernstein inequality for $|\hat{y}|^2$}
\end{subfigure}
  \caption{Triangle lower bound using the Bernstein inequality}
  \label{fig:bernstein}
\end{figure}
%
%
Therefore we have $|\hx(\ome)|^2\geq 1/2$ in $(\ome_0 -\eps,\ome_0+\eps)$ with $\eps=1/(2ns)$, see
\figref{fig:bernstein}. Due to the invariance of the norm versus frequency shifts, we can
assume w.l.o.g. $\ome_0=\eps$ and obtain

\begin{align}
  \Norm{\vx * \vy}^2_2 = \int_0^1 |\hx(\ome) \cdot \hy(\ome)|^2 d \ome \geq \frac{1}{2} \int_{\ome_0 -\eps}^{\ome_0+\eps}
  |\hy(\ome)|^2 d\ome = \frac{1}{2} \int_0^{2\eps} |\hy(\ome)|^2 d\ome \geq 
 \frac{1}{2} \int_0^{1/ns} |\hy(\ome)|^2 d\ome .\label{eq:xbern}
\end{align}
It remains to find a lower estimate of the magnitude of $|\hy|^2$ in $[0, 2\eps)=[0,(ns)^{-1})$.

To this end, we use again a ``Bernstein triangle argument'', involving a good lower bound $c_y$ of the maximum of
$|\hy(\ome)|^2$ in $[0,(ns)^{-1})$. Indeed, since $|\hy|^2$ is a polynomial of degree less than or equal to $n$, we get
  with a similar argument as in \eqref{eq:bernsteinx}  
\begin{align}
  \Norm{p'_y}_{\infty} \leq n\Norm{p_y}_{\infty}\leq n \Norm{y}_1^2 \leq n. \label{eq:bernsteiny}
\end{align}
The area under the isosceles triangle in \figref{fig:bernstein} of height $c_\vy$ and length $2c_\vy/n$  provides  a
lower estimate for the integral in \eqref{eq:xbern}, namely
\begin{align}
  \frac{1}{2}\int_{0}^{1/ns} |\hy(\ome)|^2 d\ome >  \frac{c_\vy^2}{2n}. \label{eq:fy}
\end{align}
Let us now find a lower bound for
\begin{align}
  c_y^2=\max_{\ome\in[0,1/ns]} |\hy(\ome)|^4 = \max_{\ome\in[0,1]} \Big| \sum_k y_k e^{-2\pi i \frac{\ome}{ns}k}
  \Big|^4.\label{eq:cy2}
\end{align}
For $\ome=0$ we have 
\begin{align}
|\hy(0)|^4= |\sum_k y_k|^4,
\end{align}
which can vanish, e.g., if the support $J=\supp(\vy)$ has even cardinality and the $y_k$'s have pairwise constant
magnitude with changing sign $(-1)^k$ for $k\in[|J|]$.  Hence, for such an $\vy$ we need another frequency sample in
$[0,(ns)^{-1})$ to obtain a bound. Since $p_y$ has at most $n$ zeros in $(0,(ns)^{-1})$ an oversampling with $M=nsd$ for
some $d\in \N$ guarantees therefore non-zero entries.  Hence we get the following min-max problem:
\begin{align}
  c_y^2\geq \min_{\substack{0\not=\vy\in \Sigma_f^n\\\Norm{\vy}_1=1}}\max_{l\in [d]} {\Betrag{\sum_{k=0}^{n-1} y_k e^{-2\pi i k
    \frac{l}{M}} }^4}=
  \min_{\substack{0\not=\vy\in\Sigma^n_f\\\Norm{y}_1=1}}  {\|\FmatrixM\vy\|_\infty^4},
\end{align}
where $\FmatrixM$ is the first $d\times n$ block of the Fourier matrix \eqref{eq:fmatrix} in $M$ dimensions.
We relax this problem by considering an averaging by the $\ell^2-$norm over all $d$ samples  %
\begin{align}
  \geq\min_{0\not=\vy\in\Sigma_f^n} \frac{1}{d^2} \frac{\|\FmatrixM \vy\|^4_2}{\Norm{y}_1^4}
  \geq\min_{0\not=\vy\in\Sigma_f^n} \frac{1}{d^2|\supp y|^2} \frac{\|\FmatrixM \vy\|^4_2}{\Norm{y}_2^4}
  \geq\frac{1}{d^2f^2}\left(\min_{0\not=\vy\in\Sigma_f^n}\frac{\|\FmatrixM \vy\|^2_2}{\Norm{y}_2^2}\right)^2\label{eq:cdn}.
\end{align}
The expression in the bracket is the square of the smallest $f-$sparse eigenvalue $\lammin(f,\FmatrixM)$
\eqref{eq:sparseeigenvalue} of the $d\times n-$ Fourier minor
$\FmatrixM$. 
But the support $J=\{j_0,j_1,\dots,j_{f-1}\}\subset[n]$ of $\vy$, cuts out not more than $f$ columns of
$\FmatrixM$. Each support set $J$ with $d=|J|$ yields a $d\times d$ \namen{Vandermonde} matrix  
\begin{align}
  V_J:=
  \begin{pmatrix}
    1         & 1       & \dots & 1\\
    w^{j_{0}} & w^{j_1} & \dots & w^{j_{d-1}}\\
    \vdots & \vdots & & \vdots\\
    w^{j_{0}(d-1)} & w^{j_1 (d-1)} & \dots & w^{j_{d-1}(d-1)}
  \end{pmatrix}
 \end{align}
with $\ome=e^{-2\pi i/M}$.
Its determinant is
\begin{align}
  \det(V_J)=\Pro_{0\leq l<k<d} (e^{-2\pi i j_l/M} -e^{-2\pi i j_k/M}), 
\end{align}
which shows that $V_J$ is non-singular, see also \cite[Proposition 3.6]{KPR08}.
Moreover, by the \namen{Rayleigh-Ritz} theorem, \eqref{eq:cdn} is the smallest singular value of all Vandermonde matrices
generated by $\{w^{j_l}\}_{l=0}^{d-1}$, i.e.,
\begin{align}
  c^2=\min_{\substack{y\in\Sigma_f^n\\\Norm{y}_1=1}} 
    c_y^2\geq\min_{0\not=\vy\in\Sigma_f}\frac{1}{d^2|\supp(y)|^2}\frac{\big\|\FmatrixM \vy\big\|_2^4}{\Norm{y}_2^4} 
  \geq \min_{|J|\leq f} \min_{\substack{u\in\C^f\\\Norm{u}_2=1}} \frac{1}{|J|^4}  \big\langle u,V_{J}^*
  V_{J} u\big\rangle^2
   =\min_{|J|\leq f} \frac{\lammin^2 (V^*_J V_J)}{|J|^4}, \label{eq:sig}
 \end{align}
where the smallest eigenvalue $\lam_1=\lammin(V^*_J V_J)$ can be lower bounded by the geometric-arithmetic mean
\begin{align}
  |\det(V_J)|^2&=\det(V^*_J V_J)= \lam_1 \Pro_{j=2}^{d} \lam_j \leq \lam_1\cdot \bigg(\frac{1}{d-1} \sum_{j=2}^{d} \lam_j
  \bigg)^{d-1}\\
  &=\lam_1 \cdot \Bigg(\frac{\Norm{V_J}_F^2}{d-1}\Bigg)^{d-1}=\lam_1\cdot \Bigg(\frac{d^2}{d-1}\Bigg)^{d-1}\leq \lam_1
  (2d)^{d-1}
\end{align}
since the Frobenius norm $\Norm{V_J}_F$ of an $d\times d$ minor $V_J$ of a Fourier matrix is $d=|J|$. Hence we have 
\begin{align}
  c^2 \geq  \min_{|J|\leq f} \frac{{\lammin}^2(V^*_J V_J)}{|J|^4} \geq \min_{|J|\leq f}
  2^{-2|J|+2} |J|^{-2|J|-2} |\det(V_J)|^4 \label{eq:sig2}.
\end{align}
Taking the minimum  over all $J\subset[n]$ with $|J|\leq f$ one finds easily that $J^*=\{0,1,\dots,f-1\}=[f]$ minimizes
the determinant, since then the points $\{w^k\}_{k=0}^{f-1}$  on the unit circle are best concentrated around one.
Further, the factor in \eqref{eq:sig2} is also minimized for $|J|=d=f$. Hence we get indeed
\begin{align}
  \min_{\substack{J\subset[n]\\|J|\leq f}} |\det(V_J)|^2& =|\det(V^*_{J})|^2 =\Pro_{0\leq l<k<f} \Betrag{e^{-2\pi i l/M} -e^{-2\pi i k/M}}^2\\
  &= \Pro_{0\leq k< l <f}  4\sin^2\left(\frac{l-k}{2M}\cdot 2\pi \right)\\
  &= 2^{f(f-1)} \sin^{2(f-1)}\left(\frac{1}{M}\pi\right) \cdot
    \sin^{2(f-2)}\left(\frac{2}{M}\pi\right)  \cdot \ldots \cdot \sin^2\left(\frac{f-1}{M}\pi\right)\label{eq:f2casesin}.
\intertext{But for $s,f\geq 2$  we have $t=2(f-1)/M \in(0,1)$ and hence $\sin^2(t\pi/2)>t^2$}
& > 2^{f(f-1)} \Pro_{0< k <f} \left(\frac{2k}{M}\right)^{2(f-k)} 
  = 2^{f(f-1)}\Pro_{1\leq k <f} (f-k)^{2k} \Pro_{1\leq k < f} \left(\frac{2}{M}\right)^{2k}\label{eq:f2case}.  
\intertext{If $f\geq 3$ we have $(f-k)\geq 2$ for $1\leq k\leq f-2$ and we get}
  &\geq2^{f(f-1)} {2^{2f-4}} \left(\frac{2}{M}\right)^{2\sum_{1\leq k< f}k} =2^{f(f-1)} \cdot 2^{2f-4} \cdot \left(\frac{2}{M}\right)^{f(f-1)}
  =2^{2f^2-4}\cdot M^{-f(f-1)} \label{eq:mindet}. 
\end{align}
In fact, the case $f=2$ is also valid, since then we get the same estimate by using $2^2  \sin^2(\pi/M)\geq 2^2 (2/M)^2$
in \eqref{eq:f2casesin}.
We get with  \eqref{eq:xbern}, \eqref{eq:fy} and $f=d$ in \eqref{eq:sig2}
together with \eqref{eq:mindet} and $M=nsf$
\begin{align}
  \alp^2(s,f,n)&>\frac{1}{2n}\cdot 2^{-2f+2} \cdot f^{-2f-2} \cdot 2^{4f^2-8} M^{-2f(f-1)} \\
  &=  2^{4f^2-7-2f} (sf)^{-2f(f-1)} \cdot f^{-2f-3} \cdot n^{-2f(f-1)-1}\\
  &=  2^{4f^2-7-2f} s^{-2f^2+2f} \cdot f^{-2f^2-3} \cdot n^{-2f(f-1)-1}\label{eq:sf}.
\end{align}
Taking the square-root we have
\begin{align}
  \alp(s,f,n)&>2^{2f^2-7/2-f -f^2\log(sf) +f\log s -\frac{3}{2}\log f}  n^{-f^2+f-1}\\
   &>2^{2f^2 -f^2\log(sf) +f\log(s/2) -\frac{3}{2}\log(4 f)}  n^{-f^2+f-1}.
\end{align}
As already can be seen in \eqref{eq:mindet} we get an $-f^3\log f$ leading term in the exponent if we choose $n$ from
\thmref{thm:compression}. This can  not be further reduced in power. To see a scaling behaviour in sparsity we can
simplify for $s=f$  in \eqref{eq:sf} and obtain for the $n$ independent factor
\begin{align}
  2^{4s^2 -7 -2s} 2^{(-4s^2+2s-3)\log s} &=2^{(4s^2-2s)(1-\log s)-7-3\log s} \\
  &=2^{-(4s^2-2s)\log(s/2) - 2\log(s/2) +1} >2^{-2(2s^2-s+1)\log(s/2)}.
\end{align}
Using $1\leq \log s \leq s-1$ we get for $\alp(s)$
\begin{align}
\alp(s) &>2^{-(2s^2-s+1)\log(s/2)} n^{-s^2+s-1/2}.
\end{align}
\end{proof}

Taking $n=\lfloor 2^{2(m-\sqrt{m})\log(m-\sqrt{m})}\rfloor$ for $m=2s-1$ from \thmref{thm:compression} 
we get with \thmref{thm:lowerbound}
\begin{align}
\alp(s) >2^{-(2s^2-s+1)\log(s/2)} \cdot 2^{-2s^2(2s-1-\sqrt{2s-1})\log(2s-1 -\sqrt{2s-1})}. 
\end{align}
So even for $s=f=3$ we have an incredible low bound of $\alp(s)>2^{-37}$, which is in the order of the Planck number!
The main trouble is introduced by the lower bound of the Vandermonde determinant in \eqref{eq:mindet} which already
produces $2^{-2s^3\log s}$ if using the bound $n\sim 2^{2s\log s}$ from \thmref{thm:compression}.

\noi In the following section we see that this is not far from the truth.

\section{Cancellations for Gaussians}

In the continuous setting the Gaussian $g(t)=e^{-t^2/\sigma}$ has optimal concentration in the time-frequency plane.
This motivated us  to consider a discretized and truncated  Gaussian $g(k)$ for $k\in
\{-\ts,-(\ts-1), \dots, 0, \dots, \ts-1, \ts\}$, which gives an $2\ts+1=s$ sparse sequence for $\ts\geq1$, see \figref{fig:gausspair}.
The frequency modulation $M$ given by $(Mg)(k)=e^{\pi i k}g(k)=(-1)^k g(k)$ defines a vector with
$\widehat{(Mg)}(\gam)=\widehat{g}(\gam-\frac{1}{2})$ and has therefore the smallest overlap with $\widehat{g}$ in the Fourier domain, see
\figref{bla}. We have 
\begin{align}
  \Norm{Mg*g}_2^2 &= \sum_{l=-2\ts}^{2\ts} \Bigg| \sum_{k=\max\{-\ts,l-\ts\}}^{\min\{\ts,l+\ts\}} (Mg)(k) g(l-k)
  \Bigg|^2\\ 
  &=2 \sum_{l=1}^{2\ts} \Bigg| \sum_{k=l-\ts}^{\ts} (Mg)(k) g(l-k) \Bigg|^2 + \Bigg| \sum_{k=-\ts}^{\ts} (Mg)(k) g(-k)
  \Bigg|^2\\
  &= 2 \sum_{l=1}^{2\ts} \sum_{k,k'=l-\ts}^{\ts} (Mg)(k) (Mg)(k') g(l-k) g(l-k') + \Bigg| 1+ 2\sum_{k=1}^{\ts}
  (Mg^2)(k)\Bigg|^2\\ 
  &= 2 \sum_{l=1}^{2\ts} \sum_{k,k'=l-\ts}^{\ts} (-1)^{k+k'} e^{-\frac{k^2+k'^2}{\sig}} 
       e^{-\frac{(l-k)^2 + (l-k')^2}{\sig}} + \Bigg( 1+ 2\sum_{k=1}^{\ts}  (-1)^{k} e^{-\frac{2k^2}{\sig}}\Bigg)^2.
\end{align}
Whereas  the product of the norms is given by
\begin{align}
  \Norm{g}_2^2\Norm{Mg}_2^2= \Norm{g}_2^4= \Bigg(1 +2\sum_{k=1}^{\ts} e^{-\frac{2k^2}{\sig}}\Bigg)^2.
\end{align}
The convolved expression contains negative summands which reduce the norm $\Norm{Mg*g}_2^2$. To bound these expressions
analytically one could use bounds of the Gaussian Q-function $Q(\ts)$, see e.g. \cite{DeA09}.  

In \figref{fig:gaussmini} the bounds for the discretized Gaussian $g(k)$ and his modulated  counterpart $Mg(k)$ for
$k\in\{-(s-1)/2),\dots, (s-1)/2\}$ are plotted over various variances $\sig(s)$ and sparsity $s$ odd. It can be seen
that for $\sig(s) =(s-1)/2$ the norm of the convolution between the Gaussian pairs is minimized.  The numeric simulation
in \figref{fig:gaussop} yields for the optimal Gaussian pairs
\begin{align}
  \frac{\Norm{Mg*g}}{\Norm{Mg}\Norm{g}} \geq e^{-s/2}> 2^{-3s/4}>2^{-2s^3\log s} >\alp(s,s,s)=\alp(s).
\end{align}
This establishes an exponential decay of the lower bound in sparsity $s$. 

\begin{figure}[ht]
\hspace{-1cm}
  \begin{subfigure}[b]{\textwidth}
    \centering
     \includegraphics[width=1.1\textwidth]{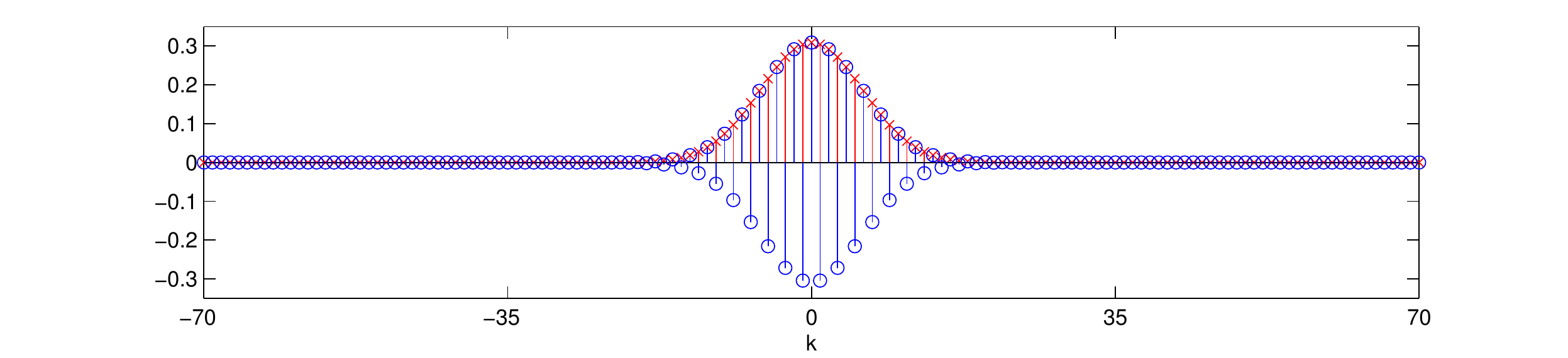}
    \caption{Normalized Gaussian pair in time domain zero padded by $s-1$ zeros.}
  \end{subfigure}\\
\hspace{-1cm}
  \begin{subfigure}[b]{\textwidth}
    \hspace{-1cm}
    \includegraphics[width=1.1\textwidth]{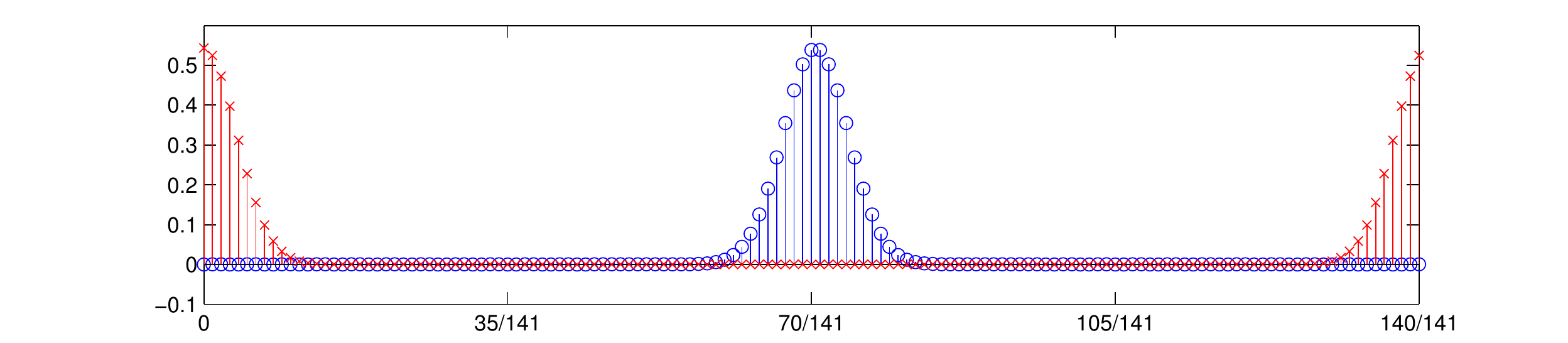}\label{fig:gaussminitime}
    \vspace{-0.3cm}
    \caption{Normalized Gaussian pair in discrete-frequency domain zero padded by $s-1$ zeros.}
  \end{subfigure}\\
  \begin{subfigure}[b]{\textwidth}
    \hspace{-1cm}
    \includegraphics[width=1.1\textwidth]{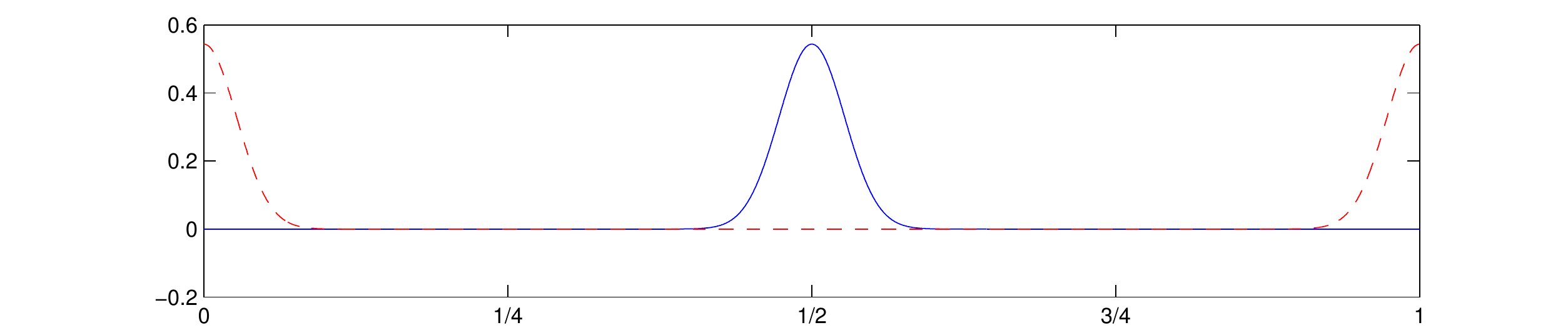}
    \vspace{-0.3cm}
    \caption{Normalized Gaussian (polynomial) pair in continuous-frequency domain.}\label{bla}
  \end{subfigure}
  \caption{Normalized Gaussian pair for $s=71$ and $\sig(s)=\frac{s-1}{2}$. Red crosses picture the Gaussian $g(k)$ and blue circles the
modulated Gaussian $Mg(k)$.}\label{fig:gausspair}
\end{figure}

\begin{figure}
  \begin{subfigure}[b]{\textwidth}
    \centering
\includegraphics[width=\textwidth]{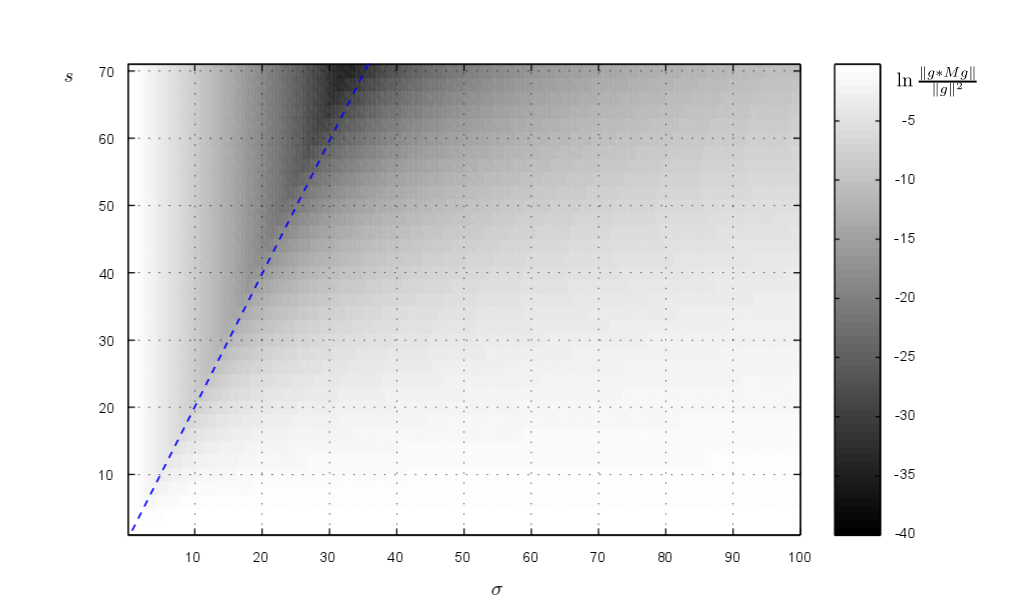}
\caption{Logarithmic lower bound over various sparsities $s$ and variances $\sigma$. The dotted line pictures $\sigma=s/2$.}
\end{subfigure}\\
\begin{subfigure}[b]{\textwidth}
  \centering
\includegraphics[width=.9\textwidth]{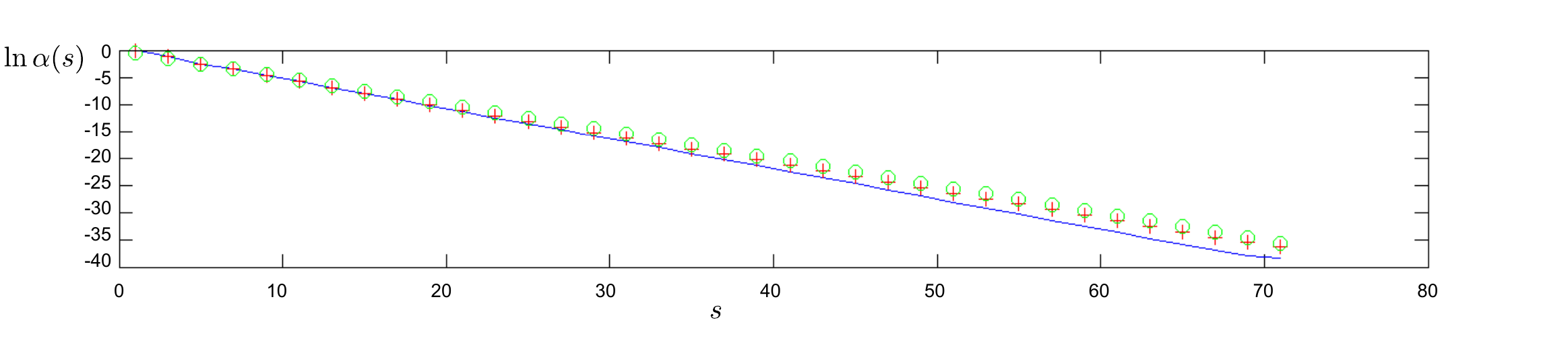}
    \caption{The blue line pictures the logarithmic $\alpha(s)$ over sparsity $s$ and best variance (numerically evaluated). Red crosses are the Gaussian pairs with
      $\sig=\frac{s-1}{2}$. Green circles pictures the bound $e^{-s/2}$.}
\label{fig:gaussop}
\end{subfigure}
\caption{Optimal variance $\sig(s)$ for the Gaussian pairs over various values of the sparsity $s$.}
  \label{fig:gaussmini}
\end{figure}




\newpage \appendix 

 \subsection{Compactly Supported Functions on  Groups with a Discrete Topology}\label{app:compactlysupported}
 Let $G=(G,+,\Pot(G))$ be a group equipped with an additive group operation $+$ and the discrete topology $\Pot(G)$ as
 topology such that $G$ is a discrete group. Further, we equip $G$ with the counting measure $\lam$, then $G=(G,\lam)$
 is a measure space.  Then we write $\ell^p(G)=L^p(G,\Pot(G),\lam)$, see e.g. \cite[Remark 10.7]{LM05}.

 Let us consider any measure $\mu$ and topology $\Omi$  on $\C$. The set of continuous complex functions
 \begin{align}
   C(G):=\set{x:G\to \C}{x \text{ continuous}}
  \end{align}
  is equal to the set of all maps $x: G\to
 \C$, since it holds for any open set $V\subset \C$
 \begin{align}
   x^{-1}(V) \subset G.
\end{align}
Therefore $x^{-1}(V)\in \Pot(G)$, i.e., an open set, and hence $x$ is continuous, see e.g. \cite[Def.1.2]{Rud87}.  Also,
the space of continuous functions with compact support
\begin{align}
  \Stetig_c(G)=\set{x:G\to \C}{ x \text{ continuous}, \cc{\supp(x)} \text{ compact}} \label{eq:contcomp}
\end{align}
with $\supp(x):=\{x\in G\mid x(g)\not= 0\}$,
equal the space of all finitely supported sequences \eqref{eq:contcompfunc}. 
If $G=\Z^d$ for some $0\not=d\in \N$, i.e., the
group is countable and finitely generated, then we get $\Norm{x}_p^p=\sum_{g\in \Z^d} |x(g)|^p$ for $0 < p<\infty$ and
$\Norm{x}_\infty=\sup_{g\in G} |x(g)|$. The convolution in \eqref{eq:convolution} becomes the \emph{discrete
convolution} for $\vx,\vy\in \ell^1(\Z^d)$  
\begin{align}
  (\vx * \vy)(g)= \sum_{h\in \Z^d} x(h)y(h-g) \label{eq:defconv},
\end{align}
see, e.g., \cite{Rud62}. Convolution turns $\ell^1 (G)$ into a unital Banach algebra and
$\Stetig_c(G)$ into a unital normed algebra both with unit $\del_0$, given component wise for $0\not=g\in G$ by
$\del_0(g)=0$ and $\del_0(0)=1$. Note, $L^1(G,\mu)$ and $\Stetig_c(G)$ have only a unit if $G$ is discrete, see, e.g.,
\cite{Que01}.

\subsection{Upper Bound for the Freiman Dimension}\label{app:dbound}
  
By a simple estimation of Tao and Vu \eqref{eq:doublingtao2} one can show for the Freiman dimension $d$ of $A$ with
$|A|=m$ the bound \eqref{eq:dtao} as
\begin{align}
  d \leq \td \leq d_{TV}:= m - \lfloor \sqrt{2m-2}+\frac{1}{2}\rfloor\label{eq:taovuestimate}.
\end{align}
It is easy to verify, that for $m\in\{1,2,3\}$ we get $d=1=d_{TV}$. For $m=4$ we have also $d=1=\td$ as solution of the exact
Tao and Vu problem \eqref{eq:doublingtao2}, since
\begin{align}
  & \frac{|A|^2}{2} -\frac{|A|}{2} +1 =8-2 +1 =7\leq 7=2\cdot 4 - 1=(\td+1)|A| - \frac{\td(\td+1)}{2}.
\end{align} 
Hence we have only to consider $m\geq 5$ in \eqref{eq:taovuestimate}. We can show
\begin{align}
  \lfloor \sqrt{2m-2}+\frac{1}{2}\rfloor \geq\sqrt{2m-2}-\frac{1}{2}  \geq \sqrt{m} +1,
\end{align}
which is equivalent to 
\begin{align}
    \sqrt{2m-2} \geq \sqrt{m} + \frac{1}{2}.
  \end{align}
By squaring both sides we get indeed
\begin{align}
  \quad 2m-2  \geq m  +\sqrt{m} +\frac{1}{4}  \quad \LRA \quad m  \geq \sqrt{m} +2 + 0.25,
\end{align} 
which is true for all $m\geq 5$. Hence we have for $m\geq 5$
\begin{align}
  d_{TV}&\leq m -\sqrt{m}-1 \label{eq:bound5}.
\end{align}
Note, the bounds  getting worse for $m$ getting large, see \figref{fig:nbound}.

 \subsection{Log Factorial Bounds}\label{app:logfac}

  For $d\geq 1$ we can write the natural logarithm of the factorial as
  \begin{align}
    \ln(d!)=\ln(\Pro_{x=1}^{d} x)= \sum_{x=1}^d \ln x
  \end{align}
  which can be upper bounded by the integral
  \begin{align}
    \ln(d!) & \leq \int_{0}^{d} \ln(x+1) dx\\
            & =    (x+1)\ln(x+1) -x\big\mid_{0}^{d} \\
            & = (d+1)\ln(d+1) -d.
  \end{align}
  Hence, for the binary logarithm $\log$, we get
  \begin{align}
    \log (d!)\leq ((d+1)\ln(d+1) -d)/\ln 2=(d+1)\log(d+1) -d/\ln 2.
  \end{align}
  Now we see that also for $m=4$  we get with \eqref{eq:bound5} 
  \begin{align}
    d! \leq 2^{(m-\sqrt{m} )\log(m-\sqrt{m}) -(m-\sqrt{m}-1)/\ln 2 }\label{eq:dfacbound}
  \end{align}
  since for $m=4$ we can show that the Tao-Vu bound gives $d_{TV}=1$ in \eqref{eq:doublingtao2} and therefore again the
  Konyagin-Lev bound with $n(4)=2^{4-2}=4$.  Hence $d!=1$. The right hand side of \eqref{eq:dfacbound}
  \begin{align}
    2^{(4-2)\log(4-2)-1/\ln 2 } > 2^{0.5} > 1.
  \end{align}

\subsection{Konyagin-Lev Bound}\label{app:klbound}

Konyagin and Lev conjectured the following in \cite{KL00}.
\begin{conjecture}
  For a finite set $A\subset \Z$ with $|A|=m$ there exists a Freiman isomorphism $\phi:A\to \Z$ such that
  $\phi(A)\subset [2^{m-2}+1]$.
\end{conjecture}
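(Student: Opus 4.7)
The plan is to approach the conjecture by induction on $m$, establishing the base cases directly and then extending inductively. For $m \in \{1,2\}$ the statement is essentially trivial: any one-element set maps to $\{0\}$ and any two-element set to $\{0,1\}$. For $m = 3$, after a shift we may assume $A = \{0, a_1, a_2\}$ with $a_1 < a_2$; the only possible nontrivial additive relation is $2a_1 = a_2$, so $\phi(A) = \{0,1,2\} \subset [2^1+1]$ in both cases (with the image tight for $2a_1 = a_2$, loose otherwise). For $m = 4$, enumerate the finitely many additive configurations on a $4$-element set (Sidon vs.\ non-Sidon, arithmetic progression, etc.) and verify $\phi(A) \subset [2^2+1] = [5]$; the extremal configuration is the geometric progression $\{0,1,2,4\}$ (or its dilate), which is tight.

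For the inductive step, assume the result holds for sets of size $m-1$. Given $A \subset \Z$ with $|A| = m$, shift so that $0 \in A$ and single out one element $a^* \in A$. Apply the induction hypothesis to $A' = A \setminus \{a^*\}$ to obtain a normalized Freiman isomorphism $\phi' : A' \to [2^{m-3}+1]$. The plan is to extend $\phi'$ to $\phi : A \to \Z$ by selecting $\phi(a^*) = t$ for an integer $t$ to be determined. The Freiman isomorphism condition \eqref{eq:f2isodef} translates into two kinds of constraints on $t$: every equation $a^* + a_i = a_j + a_k$ with $a_i, a_j, a_k \in A'$ forces $t = \phi'(a_j) + \phi'(a_k) - \phi'(a_i)$, every equation $2a^* = a_j + a_k$ forces $2t = \phi'(a_j) + \phi'(a_k)$, and every non-equation in $A$ forbids a corresponding value of $t$. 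The forced values must all coincide (which is automatic because $\phi'$ is already a Freiman isomorphism on $A'$), so $t$ is either uniquely determined, or else free subject to $O(m^3)$ forbidden values.

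The main obstacle will be controlling where the forced (or admissible) value of $t$ lands. The available ``slots'' in $[2^{m-2}+1]$ have cardinality $2^{m-2}+1$, while the forbidden values grow only polynomially in $m$, so a naive counting argument locates $t$ somewhere in an interval of length $2^{m-2}+1$, but not necessarily in the one anchored at $0$. Bridging this gap requires a careful choice of $a^*$: an element chosen to minimize ``how much new diameter it introduces.'' A natural candidate is to take $a^* = \max A$ or $a^* = \min A$ relative to an ordering induced by $\phi'$, or to select $a^*$ as a vertex of the convex hull of $A$ inside its universal ambient group. The tight examples, namely the binary geometric progressions $\{0, 1, 2, 4, \dots, 2^{m-2}\}$, strongly suggest that the correct inductive invariant tracks not just the diameter but the binary representation of the positions in a canonical embedding.

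I expect the hard part to be precisely this: induction as sketched loses control of the diameter, typically doubling it at each step in the worst case, and that already matches $2^{m-2}$ up to constants but not with the exact ``$+1$''. A genuine proof of the Konyagin--Lev conjecture (which remains open) likely requires a non-inductive global argument, perhaps a carefully designed linear projection from the universal ambient group $\mathcal{U}(A+A)$ of Freiman dimension $d$ down to $\Z$, with the projection chosen so as to saturate the binary-progression extremal example. Closing the gap between the Grynkiewicz bound $d!^2 (3/2)^{d-1} 2^{m-2} + (3^{d-1}-1)/2$ used in \thmref{thm:compression} and the conjectured $2^{m-2}+1$ is the crux of the problem, and is not achievable by the Freiman-dimension estimates \eqref{eq:dtao} used in the present paper.
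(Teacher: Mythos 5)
The statement you are asked to prove is not proved in the paper at all: it is the Konyagin--Lev conjecture, stated in the appendix precisely as an open conjecture from \cite{KL00}. The paper only ever uses it in the range $m\in\{1,2,3,4\}$, where it can be checked by direct enumeration (as you do in your base cases), and otherwise falls back on the much weaker Grynkiewicz bound $d!^2(3/2)^{d-1}2^{m-2}+(3^{d-1}-1)/2$ from \cite[Theorem 20.10]{Gry13}. So your closing assessment --- that the conjecture remains open and that the Freiman-dimension machinery of the paper cannot close the gap --- is exactly right, and to that extent your proposal is an honest non-proof of an unproven statement rather than a failed proof of a theorem.

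That said, there is one concrete error in your inductive sketch worth flagging. You claim that when several equations $a^*+a_i=a_j+a_k$ force values of $t=\phi'(a_j)+\phi'(a_k)-\phi'(a_i)$, these forced values ``must all coincide (which is automatic because $\phi'$ is already a Freiman isomorphism on $A'$).'' It is not automatic. Two such equations, say $a^*+a_i=a_j+a_k$ and $a^*+a_{i'}=a_{j'}+a_{k'}$, combine to the relation $a_i+a_{j'}+a_{k'}=a_{i'}+a_j+a_k$ inside $A'$, which is an additive relation among \emph{three} elements on each side. A Freiman isomorphism of order $2$ (the only kind Definition~\ref{def:freiman} provides) preserves only two-term sums, so $\phi'$ need not respect this relation, and the two candidate values of $t$ can genuinely disagree. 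Repairing this would require carrying an order-$3$ (or higher) isomorphism through the induction, which changes the problem and worsens the achievable diameter bounds. Combined with the diameter-doubling issue you already identify, this confirms that the inductive route cannot establish the conjectured bound $[2^{m-2}+1]$; no proof along these lines exists in the paper or, to date, in the literature.
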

The bound $2^{m-2}$ by Konyagin-Lev is given by the black line in \figref{fig:nbound} and is sharp for Sidon sets such
as the geometric progression $\{0,2^0,2^1,\dots,2^{m-2}\}$.  The red bound is the one derived in
\thmref{thm:compression}.  The blue bounds are the one derived from Grynkiewicz bounds with a sharp version (floor
operation) of the Tao-Vu bound \eqref{eq:taofac} with the exact factorial expression. The dotted blue line is the
Grynkiewicz bound \eqref{eq:gryn} combined with Tao-Vu's bound on $\td$ gives in \eqref{eq:dtao}. The first explicit
bound given by one of the author in \cite[Theorem 2]{Wal14} as $n(m)=10^{10^8(s+f-1)\log^3(s+f-1)}$ was a rough bound
even in case of $s+f-1=m$, and is therefore not plotted in \figref{fig:nbound}. However, the proof idea of \cite[Theorem
2]{Wal14}  is very similar to \cite[Theorem 20.10]{Gry13}. 
\begin{figure}[ht]
  \centering
  \vspace{-1cm}\hspace{-0.85cm}\includegraphics[scale=0.65]{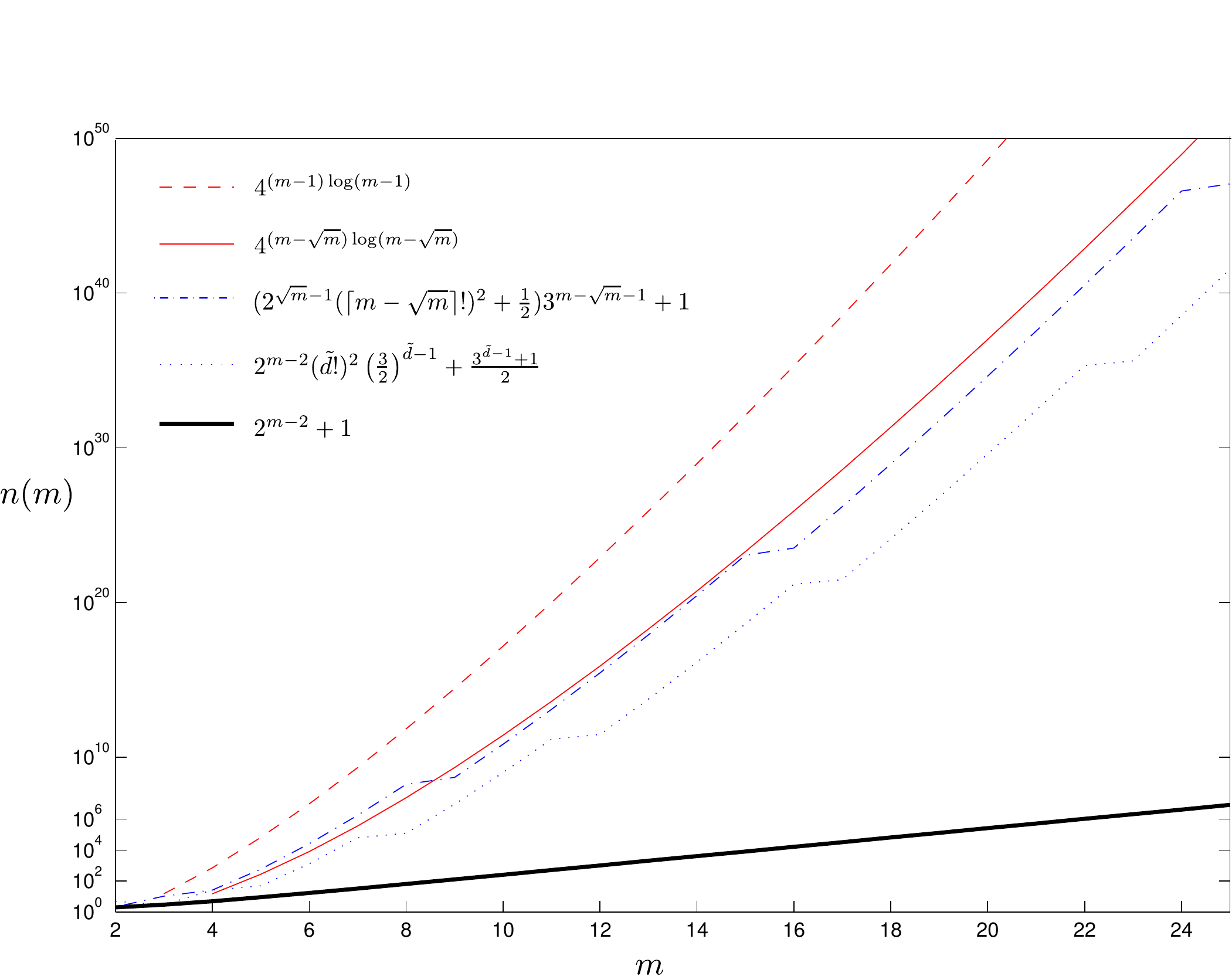}
   \caption{Analytic dimension bounds for $n=n(m)$. The black curve is the Konyagin-Lev conjecture, whereas the blue
   lines correspond to exact expression of a Grynkiewicz result combined with the Tao-Vu bound on $d$. The red lines are the
   simplified upper bounds, where the straight red bound is used in \thmref{thm:compression}. }
   \label{fig:nbound}
\end{figure}
\newpage
\printbibliography

\end{document}